\documentclass[smallcondensed]{svjour3} 
\pdfoutput=1

\def\PubType{Proof} 

\usepackage[utf8]{inputenc}
\usepackage[english]{babel}  
\usepackage[T1]{fontenc} 
\usepackage{IEEEtrantools}
\usepackage{epsfig}
\usepackage{amsmath, amssymb, amsbsy}
\usepackage{xspace}          
\usepackage[noend]{algpseudocode}
\usepackage{algorithm}
\usepackage{algorithmicx}
\usepackage{color}
\usepackage{cite}            
\usepackage{booktabs} 
\usepackage{verbatim} 
\usepackage[colorlinks=true,
            pdfauthor={Sven Puchinger, Johan Rosenkilde, Wenhui Li, Vladimir Sidorenko},
            pdftitle={Row Reduction Applied to Decoding of Rank-Metric and Subspace Codes},
            pdfproducer={Latex},
            pdfcreator={pdflatex}]
           {hyperref}
\usepackage{url}

\definecolor{navy}{rgb}{0,0,.5}

\makeatletter
\if@twocolumn
  \renewcommand\normalsize{%
   \@setfontsize\normalsize\@xpt{12.5pt}%
   \abovedisplayskip=3 mm plus6pt minus 4pt
   \belowdisplayskip=3 mm plus6pt minus 4pt
   \abovedisplayshortskip=0.0 mm plus6pt
   \belowdisplayshortskip=2 mm plus4pt minus 4pt
   \let\@listi\@listI}%

  \renewcommand\small{%
   \@setfontsize\small{8.5pt}\@xpt
   \abovedisplayskip 8.5\p@ \@plus3\p@ \@minus4\p@
   \abovedisplayshortskip \z@ \@plus2\p@
   \belowdisplayshortskip 4\p@ \@plus2\p@ \@minus2\p@
   \def\@listi{\leftmargin\leftmargini
               \parsep 0\p@ \@plus1\p@ \@minus\p@
               \topsep 4\p@ \@plus2\p@ \@minus4\p@
               \itemsep0\p@}%
   \belowdisplayskip \abovedisplayskip}

\else
  \if@smallext
   \renewcommand\normalsize{%
   \@setfontsize\normalsize\@xpt\@xiipt
   \abovedisplayskip=3 mm plus6pt minus 4pt
   \belowdisplayskip=3 mm plus6pt minus 4pt
   \abovedisplayshortskip=0.0 mm plus6pt
   \belowdisplayshortskip=2 mm plus4pt minus 4pt
   \let\@listi\@listI}%

  \renewcommand\small{%
   \@setfontsize\small\@viiipt{9.5pt}%
   \abovedisplayskip 8.5\p@ \@plus3\p@ \@minus4\p@
   \abovedisplayshortskip \z@ \@plus2\p@
   \belowdisplayshortskip 4\p@ \@plus2\p@ \@minus2\p@
   \def\@listi{\leftmargin\leftmargini
               \parsep 0\p@ \@plus1\p@ \@minus\p@
               \topsep 4\p@ \@plus2\p@ \@minus4\p@
               \itemsep0\p@}%
   \belowdisplayskip \abovedisplayskip}
 \else
  \renewcommand\normalsize{%
   \@setfontsize\normalsize{9.5pt}{11.5pt}%
   \abovedisplayskip=3 mm plus6pt minus 4pt
   \belowdisplayskip=3 mm plus6pt minus 4pt
   \abovedisplayshortskip=0.0 mm plus6pt
   \belowdisplayshortskip=2 mm plus4pt minus 4pt
   \let\@listi\@listI}%

  \renewcommand\small{%
   \@setfontsize\small\@viiipt{9.25pt}%
   \abovedisplayskip 8.5\p@ \@plus3\p@ \@minus4\p@
   \abovedisplayshortskip \z@ \@plus2\p@
   \belowdisplayshortskip 4\p@ \@plus2\p@ \@minus2\p@
   \def\@listi{\leftmargin\leftmargini
               \parsep 0\p@ \@plus1\p@ \@minus\p@
               \topsep 4\p@ \@plus2\p@ \@minus4\p@
               \itemsep0\p@}%
   \belowdisplayskip \abovedisplayskip}
  \fi
\fi

\makeatother

\usepackage[final,tracking=true,kerning=true,spacing=true,factor=1100,stretch=10,shrink=20]{microtype}

\algrenewcommand\alglinenumber[1]{{\scriptsize#1}}
\algrenewcommand\algorithmicrequire{\textbf{Input:}}
\algrenewcommand\algorithmicensure{\textbf{Output:}}
\newcommand{\Ifline}[2]{\State \textbf{if }#1{ \textbf{then} }#2}

\usepackage{varioref}        
\usepackage{fancyref}

\makeatletter
\def\mkfancyprefix#1#2{%
\expandafter\def\csname fancyref#1labelprefix\endcsname{#1}%
\begingroup\def\x{\endgroup\frefformat{plain}}%
    \expandafter\x\csname fancyref#1labelprefix\endcsname
    {\MakeLowercase{#2}\fancyrefdefaultspacing##1}%
\begingroup\def\x{\endgroup\Frefformat{plain}}%
    \expandafter\x\csname fancyref#1labelprefix\endcsname
    {#2\fancyrefdefaultspacing##1}%
\begingroup\def\x{\endgroup\frefformat{vario}}%
    \expandafter\x\csname fancyref#1labelprefix\endcsname
    {\MakeLowercase{#2}\fancyrefdefaultspacing##1##3}%
\begingroup\def\x{\endgroup\Frefformat{vario}}%
    \expandafter\x\csname fancyref#1labelprefix\endcsname
    {#2\fancyrefdefaultspacing##1##3}%
}
\makeatother
\fancyrefchangeprefix{\fancyrefeqlabelprefix}{eqn}
\mkfancyprefix{ssec}{Section}
\mkfancyprefix{tbl}{Table}
\mkfancyprefix{thm}{Theorem}
\mkfancyprefix{lem}{Lemma}
\mkfancyprefix{cor}{Corollary}
\mkfancyprefix{prop}{Proposition}
\mkfancyprefix{prob}{Problem}
\mkfancyprefix{alg}{Algorithm}
\mkfancyprefix{inv}{Invariant}
\mkfancyprefix{ex}{Example}
\mkfancyprefix{line}{Line}
\mkfancyprefix{def}{Definition}
\mkfancyprefix{itm}{Item}
\mkfancyprefix{app}{Appendix}
\mkfancyprefix{rem}{Remark}
\newcommand{\cref}[1]{\Fref{#1}}

\def\ve#1{{\mathchoice{\mbox{\boldmath$\displaystyle #1$}}%
              {\mbox{\boldmath$\textstyle #1$}}%
              {\mbox{\boldmath$\scriptstyle #1$}}%
              {\mbox{\boldmath$\scriptscriptstyle #1$}}}}

\def\confType{Conference} 
\def\jourType{Journal}

\definecolor{orange}{rgb}{1,0.5,0}
\ifx\PubType\confType
  \newcommand{\todo}[1]{}

\else\ifx\PubType\jourType
  \newcommand{\todo}[1]{}

\else
  \newcommand{\todo}[1]{{\color{red}[#1]}}

\fi\fi


\newcommand{\param}{\mu}
\newcommand{\mo}{{-1}}
\newcommand{\K}{\F}

\newcommand{\F}{\mathbb{F}}
\newcommand{\R}{{\cal{R}}}
\newcommand{\Q}{{\cal{Q}}}

\newcommand{\M}{{\cal{M}}}
\newcommand{\V}{{\cal{V}}}
\newcommand{\MR}{\R^{m\times m}}

\newcommand{\OD}[1]{{\Delta(#1)}} 
\newcommand{\m}{\ensuremath{\ve{m}}}
\renewcommand{\u}{\ensuremath{\ve{u}}}
\renewcommand{\v}{\ensuremath{\ve{v}}}
\newcommand{\w}{\ve{w}}
\renewcommand{\vec}[1]{\ensuremath{\ve{#1}}}
\newcommand{\LPc}{\ensuremath{\mathrm{LP}}}
\newcommand{\LP}[2][\null]{\LPc_{#1}(#2)}
\newcommand{\LT}[1]{\ensuremath{\mathrm{LT}(#1)}}
\newcommand{\LC}[1]{\ensuremath{\mathrm{LC}(#1)}}

\newcommand{\word}[1]{\textnormal{#1}}
\newcommand\modop{\ \word{mod}\ }         
\newcommand{\code}[1]{\textup{\textsf{#1}}}
\newcommand{\NN}{\mathbb{Z}_{\geq 0}}
\newcommand{\N}{\mathbb{Z}_{> 0}}

\newcommand{\I}[1]{^{(#1)}}

\newcommand\vecVal{\psi}

\DeclareMathOperator{\maxdeg}{maxdeg}

\AtBeginDocument{}

\hyphenation{minimisation}

\newcommand{\Fq}{\mathbb{F}_q}
\newcommand{\Fqs}{\mathbb{F}_{q^s}}
\newcommand{\ZZ}{\mathbb{Z}}
\newcommand{\QQ}{\mathbb{Q}}

\newcommand{\AP}{\mathcal{A}}

\newcommand{\mul}{\cdot}

\newcommand\ev{\word{ev}}

\newcommand{\omegaparamMV}{\chi}
\newcommand{\smallsum}{{\textstyle\sum\nolimits}}

\begin{document}

\title{Row Reduction Applied to Decoding of Rank-Metric and Subspace Codes\thanks{Sven Puchinger (grant BO 867/29-3), Wenhui Li and Vladimir Sidorenko (grant BO~867/34-1) were supported by the German Research Foundation ``Deutsche Forschungsgemeinschaft'' (DFG).
}}

\author{Sven Puchinger \and Johan Rosenkilde, n\'e Nielsen \and Wenhui Li \and Vladimir Sidorenko}
\authorrunning{Sven Puchinger, Johan Rosenkilde, n\'e Nielsen, Wenhui Li, Vladimir Sidorenko}
\institute{
Sven Puchinger \and Wenhui Li \at Institute of Communications Engineering, Ulm University, Germany \\ \email{sven.puchinger@uni-ulm.de, wenhui.li@uni-ulm.de}
\and
Johan Rosenkilde \at Department of Applied Mathematics and Computer Science, Technical University of Denmark \\
\email{jsrn@jsrn.dk}
\and
Vladimir Sidorenko \at Institute for Communications Engineering, TU München, Germany, and on leave from the Institute of Information Transmission Problems (IITP), Russian Academy of Sciences \\
\email{vladimir.sidorenko@tum.de}
}

\date{Received: date / Accepted: date}

\maketitle

\begin{abstract}
We show that decoding of $\ell$-Interleaved Gabidulin codes, as well as list-$\ell$ decoding of Mahdavifar--Vardy codes can be performed by row reducing skew polynomial matrices.
Inspired by row reduction of $\F[x]$ matrices, we develop a general and flexible approach of transforming matrices over skew polynomial rings into a certain reduced form.
We apply this to solve generalised shift register problems over skew polynomial rings which occur in decoding $\ell$-Interleaved Gabidulin codes.
We obtain an algorithm with complexity $O(\ell \param^2)$ where $\param$ measures the size of the input problem and is proportional to the code length $n$ in the case of decoding.
Further, we show how to perform the interpolation step of list-$\ell$-decoding Mahdavifar--Vardy codes in complexity $O(\ell n^2)$, where $n$ is the number of interpolation constraints.

\keywords{Skew Polynomials \and Row Reduction \and Module Minimisation \and Gabidulin Codes \and Shift Register Synthesis \and Mahdavifar--Vardy Codes}

\end{abstract}

\newpage

\section{Introduction}

Numerous recent publications have unified the core of various decoding algorithms for Reed--Solomon (RS) and Hermitian codes using row reduction of certain $\F[x]$-module bases.
First for the Guruswami--Sudan list decoder \cite{alekhnovich_linear_2005,lee_list_2008,cohn_ideal_2010}, then for Power decoding \cite{nielsen2013generalised,nielsen2014power} and also either type of decoder for Hermitian codes \cite{nielsen_sub-quadratic_2015}.
By factoring out coding theory from the core problem, we enable the immediate use of sophisticated algorithms developed by the computer algebra community such as \cite{giorgi_complexity_2003,zhou_efficient_2012}.

The goal of this paper is to explore the row reduction description over skew polynomial rings, with a main application for decoding rank-metric and subspace codes.
Concretely, we prove that Interleaved Gabidulin and Mahdavifar--Vardy codes can be decoded by transforming a module basis into weak Popov form, which can be obtained by a skew-analogue of the elegantly simple Mulders--Storjohann algorithm \cite{mulders2003lattice}.
By exploiting the structure of the module bases arising from the decoding problems, we refine the algorithm to obtain improved complexities.
These match the best known algorithms for these applications but solve more general problems, and it demonstrates that the row reduction methodology is both flexible and fast for skew polynomial rings.
Building on this paper, \cite{puchinger2016alekhnovich} proposes an algorithm which improves upon the best known complexity for decoding Interleaved Gabidulin codes.

\cref{ssec:relatedwork} summarizes related work.
We set basic notation in \cref{sec:prelim}.
\cref{sec:Decoding} shows how to solve the mentioned decoding problems using row reduction and states the final complexity results which are proven in the subsequent sections.
We describe row reduction of skew polynomial matrices in \cref{sec:RowReduction}.
\cref{sec:FasterRowRed} presents faster row reduction algorithms for certain input matrices, with applications to the decoding problems.

This work was partly presented at the International Workshop on Coding and Cryptography 2015 \cite{li2015solving}.
Compared to this previous work, we added the decoding of MV codes using the row reduction approach.\footnote{
  We opted for using the term ``row reduction'' rather than ``module minimisation'', as we used in \cite{li2015solving}, since the former is more common in the literature.}
It spurred a new refinement of the Mulders--Storjohann, in \cref{ssec:wpfwalk}, which could be of wider interest.

\subsection{Related Work}
\label{ssec:relatedwork}

In this paper we consider skew polynomial rings over finite fields without derivations \cite{ore1933theory} (see \cref{ssec:skewpolys} for this restricted definition of skew polynomials).
This is the most relevant case for coding theory, partly because they are easier to compute with, though non-zero derivations have been used in some constructions \cite{boucher2014linear}.
All of the row reduction algorithms in this paper work for skew polynomial rings with non-zero derivation, but the complexity would be worse.
The algorithms also apply to skew polynomial rings over any base ring, e.g.~$\K(z)$ or a number field, but due to coefficient growth in such settings, their bit-complexity would have to be analysed.

A skew polynomial ring over a finite field without derivation is isomorphic to a ring of \emph{linearised polynomials} under a trivial isomorphism, and the rings' evaluation maps agree.
Our algorithms could be phrased straightforwardly to work on modules over linearised polynomials.
Much literature on Gabidulin codes uses the language of linearised polynomials.

Skew polynomial rings are instances of Ore rings, and some previous work on computing matrix normal forms over Ore rings can be found in \cite{abramov_solutions_2001,beckermann2006fraction}.
The focus there is when the base ring is $\QQ$ or $\K(z)$ where coefficient growth is a major issue.
These algorithms are slower than ours when the base ring is a finite field.
\cite{middeke2011computational} considers a setting more similar to ours, but obtains a different algorithm and slightly worse complexity.

Gabidulin codes \cite{delsarte1978bilinear,gabidulin1985theory,roth1991maximum} are maximum rank distance codes over finite fields;
they are the rank-metric analogue of Reed--Solomon codes.
An Interleaved Gabidulin code is a direct sum of several Gabidulin codes, similar to Interleaved RS codes.
In a synchronised error model they allow an average-case error-correction capability far beyond half the minimum rank distance \cite{loidreau2006decoding}.
Decoding of Interleaved Gabidulin codes is often formulated as solving a simultaneous ``Key Equation'' \cite{sidorenko2014fast}.
Over $\K[x]$ this computational problem is also known as a multi-sequence shift-register synthesis \cite{feng_generalization_1991,sidorenko_linear_2011}, simultaneous Pad\'e approximation \cite{baker_pade_1996}, or vector rational function reconstruction \cite{olesh_vector_2006}.
This problem also has further generalisations, some of which have found applications in decoding of algebraic codes, e.g.~\cite{roth_efficient_2000,zeh_interpolation_2011,nielsen_sub-quadratic_2015}.
In the computer algebra community, Pad\'e approximations have been studied in a wide generality, e.g.~\cite{beckermann_uniform_1994};
to the best of our knowledge, analogous generalisations over skew polynomial rings have yet to see any applications.

Lately, there has been an interest in Gabidulin codes over number fields, with applications to space-time codes and low-rank matrix recovery \cite{augot_rank_2013}. 
Their decoding can also be reduced to a shift-register type problem \cite{mueelich2016alternative}, which could be solved using the algorithms in this paper (though again, one should analyse the bit-complexity).

Mahdavifar--Vardy (MV) codes \cite{mahdavifar2013algebraic,mahdavifar2012list} are subspace codes whose main interest lie in their
property of being list-decodable beyond half the minimum distance.
Their rate unfortunately tend to zero for increasing code lengths.
In \cite{mahdavifar_algebraic_2011}, Mahdavifar and Vardy presented a refined construction which can be decoded ``with multiplicities'' allowing a better decoding radius and rate; it is future work to adapt our algorithm to this case.
The decoding of MV codes is is heavily inspired by the Guruswami--Sudan algorithm for Reed--Solomon codes \cite{guruswami_improved_1999}, and our row reduction approach in \cref{sec:mv_codes} is similarly inspired by fast module-based algorithms for realising the Guruswami--Sudan \cite{lee_list_2008,beelen_key_2010}.

Another family of rank-metric codes which can be decoded beyond half the minimum distance are Guruswami--Xing codes \cite{guruswami_list_2013}.
These can be seen simply as heavily punctured Gabidulin codes, and their decoding as a \emph{virtual interleaving} of multiple Gabidulin codes.
This leads to a decoder based on solving a simultaneous shift-register equation, where our algorithms apply.
Guruswami--Wang codes \cite{guruswami_explicit_2014} are Guruswami--Xing codes with a restricted message set, so the same decoding algorithm applies.

Over $\K[x]$, row reduction, and the related concept of order bases, have been widely studied and sophisticated algorithms have emerged, e.g.~\cite{giorgi_complexity_2003,alekhnovich_linear_2005,zhou_efficient_2012}.
As a follow-up to this work, a skew-analogue of the algorithm in \cite{alekhnovich_linear_2005} was proposed in \cite{puchinger2016alekhnovich}.

\section{Preliminaries}\label{sec:prelim}

\subsection{Skew Polynomials}
\label{ssec:skewpolys}

Let $\K$ be a finite field and $\theta$ an $\K$-automorphism.
Denote by $\R=\K[x;\theta]$ the non-commutative ring of \emph{skew polynomials} over $\K$ (with zero derivation):
elements of $\R$ are of the form $\sum_i a_i x^i$ with $a_i \in \K$, addition is as usual, while multiplication is defined by $xa = \theta(a) x$ for all $a \in \K$.
When we say ``polynomial'', we will mean elements of $\R$.
The definition of the degree of a polynomial is the same as for ordinary polynomials.
See \cite{ore1933theory} for more details.

The evaluation map of $a \in \R$ is given as:
\begin{align*}
  a(\cdot) := \ev_a(\cdot) \, : \, & \K \to \K \ \\
  & \alpha \mapsto \smallsum_{i} a_i \theta^i(\alpha).
\end{align*}
This is a group homomorphism on $(\K,+)$, and it is a linear map over the fixed field of $\theta$.
Furthermore, for two $a,b \in \R$ we have $\ev_{ab} = \ev_a \circ \ev_b$.
This is sometimes known as operator evaluation, e.g.~\cite{boucher2014linear}.

If $\F_{q}$ is the field fixed by $\theta$ for some prime power $q$, then $\K = \F_{q^s}, s \in \N$, and $\theta(a) = a^{q^i}$ for some $0 \leq i < s$, i.e.~a power of the Frobenius automorphism of $\F_{q^s}/\;\F_q$.

\begin{definition}
For $a,b,c \in \R$, we write $a \equiv b \mod c$ (\emph{right modulo operation}) if there exists $d \in \R$ such that $a = b + d c$
\end{definition}

In complexity estimates we count the total number of the following operations: $+, -, \cdot, \slash$ and $\theta^i$ for any $i \in \N$.
For computing $\theta^i$ the assumption is that Frobenius automorphism can be done efficiently in $\Fqs$; this is reasonable since we can represent $\Fqs$-elements using a normal basis over $\Fq$ (cf. \cite[Section 2.1.2]{wachter-zeh_decoding_2013}): in this case, $a^q$ for $a \in \Fqs$ is simply the cyclic shift of $a$ represented as an $\Fq$-vector over the normal basis.

\subsection{Skew Polynomial Matrices}
\label{ssec:skewmatrices}

Free modules and matrices over $\R$ behave quite similarly to the $\K[x]$ case, keeping non-commutativity in mind:
\begin{itemize}
  \item Any left sub-module $\V$ of $\R^m$ is free and admits a basis of at most $m$ elements. 
  Any two bases of $\V$ have the same number of elements.
  \item The rank of a matrix $M$ over $\R$ is defined as the number of elements in any basis of the left $\R$-row space of $M$.
  The rows of two such matrices $M, M' \in \R^{n \times m}$ generate the same left module if and only if there exists a $U \in \mathrm{GL}_n(\R)$ such that $M = UM'$, where $\mathrm{GL}_n(\R)$ denotes the set of invertible $n \times n$ matrices over $\R$.
\end{itemize}

These properties follow principally from $\R$ being an Ore ring and therefore left Euclidean, hence left PID, hence left Noetherian\footnote{%
  $\R$ is also right Euclidean, a right PID and right Noetherian, but we will only need its left module structure.
}.
Moreover, $\R$ has a unique left skew field\footnote{%
  Skew fields are sometimes known as ``division rings''.
}
of fractions $\Q$ from which it inherits its linear algebra properties.
See e.g.~\cite{draxl1983skew,clark2012noncom} for more details.
In this paper we exclusively use the left module structure of $\R$, and we will often omit the ``left'' denotation.

We introduce the following notation for vectors and matrices over $\R$:
Matrices are denoted by capital letters (e.g. $V$). The $i$th row of $V$ is denoted by $\v_i$, the $j$th element of a vector $\v$ is $v_j$ and $v_{i,j}$ is the $(i,j)$th entry of a matrix $V$. Indices start at $0$.
\begin{itemize}\setlength\itemsep{.3em}
\item The \emph{degree of a vector} $\v$ is $\deg \v := \max_{i} \{ \deg v_i \}$ (and $\deg \ve 0 = -\infty$) and the \emph{degree of a matrix} $V$ is $\deg V := \sum_{i} \{ \deg \v_i \}$.
\item The \emph{max-degree} of $V$ is $\maxdeg V := \max_{i} \{ \deg \v_i \} = \max_{i,j} \{ \deg v_{i,j} \}$.
\item The \emph{leading position} of a non-zero vector $\v$ is $\LP{\v} := \max\{i \, : \, \deg v_i = \deg \v \}$, i.e. the \emph{rightmost} position having maximal degree in the vector.
      Furthermore, we define the \emph{leading term} $\LT{\v} := v_{\LP{\v}}$ and $\LC \v$ is the leading coefficient of $\LT \v$.
\end{itemize}

\subsection{The weak Popov form}
\label{ssec:wpf}

\begin{definition}
A matrix $V$ over $\R$ is in \emph{weak Popov form} if the leading positions of  all its non-zero rows are different.
\end{definition}

The following lemma describes that the rows of a matrix in weak Popov form are minimal in a certain way.
Its proof is exactly the same as for $\F[x]$ modules and is therefore omitted, see e.g.~\cite{nielsen2013generalised}.

\begin{lemma}\label{lem:minpop}
Let $V$ be a matrix in weak Popov form, and let $\V$ be the $\R$-module generated by its rows.
Then the non-zero rows of $V$ are a basis of $\V$ and every $\u \in \V$ satisfies $\deg \u \ge \deg \v$, where $\v$ is the row of $V$ with $\LP{\v}=\LP{\u}$.
\end{lemma}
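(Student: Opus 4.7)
The plan is to imitate the standard $\K[x]$ argument, with the one noncommutative subtlety controlled by the fact that $\theta$ is a $\K$-automorphism. Concretely, for any non-zero $a,b \in \R$ the leading coefficient of $ab$ is $\LC(a)\,\theta^{\deg a}(\LC(b)) \ne 0$, so $\deg(ab)=\deg a + \deg b$ and degrees in $\R$ still add under multiplication. This makes the usual leading-position bookkeeping go through verbatim.

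I would write any $\u \in \V$ as a left combination $\u = \sum_i a_i \v_i$, with the sum running over the non-zero rows of $V$. Among indices $i$ with $a_i \ne 0$ I would single out a distinguished index $k^\ast$ by first maximising the shifted degree $\deg a_i + \deg \v_i$ and then, among those maximisers, choosing the one with the largest leading position $\LP{\v_i}$. This tie-breaking is unambiguous precisely because $V$ is in weak Popov form, so the leading positions of the non-zero rows are pairwise distinct.

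The core computation is then to inspect the coordinate of $\u$ at position $j^\ast := \LP{\v_{k^\ast}}$, namely $u_{j^\ast} = \sum_i a_i v_{i,j^\ast}$. The $k^\ast$-summand has degree exactly $\deg a_{k^\ast} + \deg \v_{k^\ast}$. Every other summand is strictly smaller in degree: either its shifted degree is strictly less, or the shifted degrees coincide, in which case the choice of $k^\ast$ forces $\LP{\v_i} < j^\ast$, and hence $\deg v_{i,j^\ast} < \deg \v_i$ by definition of the leading position. Therefore no cancellation of the top-degree term is possible and $\deg u_{j^\ast} = \deg a_{k^\ast} + \deg \v_{k^\ast}$.

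This one calculation delivers both claims of the lemma. Applied with $\u = 0$ it produces a contradiction, so the non-zero rows are $\R$-linearly independent and hence form a basis of $\V$. Applied to a general non-zero $\u$, a nearly identical estimate at any position $j > j^\ast$ shows $\deg u_j < \deg u_{j^\ast}$, so $\LP \u = j^\ast = \LP{\v_{k^\ast}}$; the row $\v$ promised by the lemma is exactly $\v_{k^\ast}$, and $\deg \u = \deg u_{j^\ast} \ge \deg \v_{k^\ast} = \deg \v$. I do not anticipate any real obstacle; the only step that demands skew-specific care is invoking the automorphism property of $\theta$ to ensure that products of skew polynomials behave additively in degree, which prevents unexpected cancellations in the leading-coefficient analysis.
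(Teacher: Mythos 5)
Your proof is correct, and it is exactly the argument the paper has in mind: the paper omits the proof with the remark that it is ``exactly the same as for $\F[x]$ modules'' (citing the standard leading-position/no-cancellation argument), and your write-up is precisely that argument, with the one skew-specific point --- that $\theta$ being an automorphism guarantees $\deg(ab)=\deg a+\deg b$ --- correctly identified and handled.
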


We will need to ``shift'' the relative importance of some columns compared to others.
Given a ``shift vector'' $\w = (w_0,\dots,w_\ell) \in \NN^{\ell+1}$, define the mapping
\begin{align*}
\Phi_\w \, : \, \R^{\ell+1} \to \R^{\ell+1}, \; \u = (u_0,\dots,u_\ell) \mapsto (u_0 x^{w_0}, \dots, u_\ell x^{w_\ell}).
\end{align*}
It is easy to compute the inverse of $\Phi_\w$ for any vector in $\Phi_\w(\R^{\ell+1})$.
Note that since the monomials $x^{w_i}$ are multiplied from the right, applying $\Phi_\w$ will only \emph{shift} the entry polynomials, and not modify the coefficients.
We can extend $\Phi_\w$ to $\R$-matrices by applying it row-wise.
\begin{definition}
For any $\w = (w_0,\dots,w_\ell) \in \NN^{\ell+1}$, a matrix $V \in \R^{\cdot \times (\ell+1)}$ is in \emph{$\w$-shifted weak Popov} form if $\Phi_\w(V)$ is in weak Popov form.
\end{definition}

Given some matrix $V$ over $\R$, ``transforming $V$ into ($\w$-shifted) weak Popov form'' means to find some $W$ generating the same row space as $V$ and such that $W$ is in ($\w$-shifted) weak Popov form.
We will see in \cref{ssec:MS} that such $W$ always exist.

Throughout this paper, by ``row reduced'' we mean ``in weak Popov form''\footnote{%
There is a precise notion of ``row reduced'' \cite{kailath_linear_1980}[p. 384] for $\F[x]$ matrices.
Weak Popov form implies being row reduced, but we will not formally define row reduced in this paper.}.
Similarly, ``row reduction'' means ``transforming into weak Popov form''.

\section{Decoding Problems in Rank-Metric and Subspace Codes}
\label{sec:Decoding}

\subsection{Interlaved Gabidulin Codes: Multi-sequence shift registers}\label{sec:gab_codes}

It is classical to decode errors in a Gabidulin code by solving a syndrome-based ``Key Equation'': that is, a shift-register synthesis problem over $\R$, see e.g.~\cite{gabidulin1985theory}.
An Interleaved Gabidulin code is a direct sum of several Gabidulin codes \cite{loidreau2006decoding}, and error-decoding can be formulated as a shift-register synthesis of several sequences simultaneously.
A slightly more general notion of shift-register synthesis allows formulating the decoder using the ``Gao Key Equation'' \cite{wachter-zeh_decoding_2013}.
Another generalisation accommodates error-and-erasure decoding of some Gabidulin resp. Interleaved Gabidulin codes \cite{li2014transform,wachter-zeh_decoding_2013}.

All these approaches are instances of the following ``Multi-Sequence generalised Linear Skew-Feedback Shift Register'' (MgLSSR) synthesis problem:
\begin{problem}[MgLSSR]\label{prob:skewpade}
Given skew polynomials $s_i, g_i \in \R$ and non-negative integers $\gamma_i \in \NN$ for $i=1,\ldots,\ell$, find skew polynomials $\lambda, \omega_1,\dots,\omega_\ell \in \R$, with $\lambda$ of minimal degree such that the following holds:
\begin{IEEEeqnarray*}{rCl}
\lambda s_i &\equiv& \omega_i \mod g_i   \\
\deg \lambda + \gamma_0  &>& \deg\omega_i + \gamma_i
\end{IEEEeqnarray*}
\end{problem}

We show how to solve this problem by row reduction of a particular module basis.
The approach is analogous to how the $\K[x]$-version of the problem is handled by Rosenkilde in \cite{nielsen2013generalised}, with only a few technical differences due to the non-commutativity of $\R$.

In the sequel we consider a particular instance of \cref{prob:skewpade}, so $\R$, $\ell \in \N$, and $s_i, g_i \in \R$, $\gamma_i \in \NN$ for $i=1,\ldots,\ell$ are arbitrary but fixed.
We assume $\deg s_i \leq \deg g_i$ for all $i$ since taking $s_i := (s_i \modop g_i)$ yields the same solutions to \cref{prob:skewpade}.

Denote by $\M$ the set of all vectors $\v \in \R^{\ell+1}$ satisfying the congruence relation, i.e.,
\begin{equation}\label{eq:module_vectors}
\M := \big\{ (\lambda,\omega_1,\dots,\omega_\ell)\in \R^{\ell+1} \; \big| \; \lambda s_i \equiv \omega_i \mod g_i \; \forall i=1,\dots,\ell \big\}.
\end{equation}
\vspace*{-1.8em}
\begin{lemma}
  \label{lem:basis}
    Consider an instance of \cref{prob:skewpade} and $\M$ as in \eqref{eq:module_vectors}.
    $\M$ with component-wise addition and left multiplication by elements of $\R$ forms a free left module over $\R$.
    The rows of $M$ form a basis of $\M$, where
    \renewcommand{\arraystretch}{0.9}
    \begin{eqnarray}
      \label{eqn:M}
    M=\left (
        \begin{array}{ccccc}
        1 & s_1 & s_2 & \dots & s_\ell\\
        0 & g_1 & 0 & \dots  & 0 \\
        0 & 0 & g_2 & \dots  & 0 \\
        \vdots & \vdots & \ddots & \ddots&\vdots \\
        0 & 0 & 0 & \dots  & g_\ell \\
        \end{array}\right) \ . 
    \end{eqnarray}
\end{lemma}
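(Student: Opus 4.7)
The plan is to establish the two claims in turn. For the module structure of $\M$, I would verify closure directly from the definition: if $(\lambda,\omega_1,\dots,\omega_\ell)$ and $(\lambda',\omega_1',\dots,\omega_\ell')$ lie in $\M$, write $\lambda s_i = \omega_i + d_i g_i$ and $\lambda' s_i = \omega_i' + d_i' g_i$ with $d_i, d_i' \in \R$; adding gives $(\lambda+\lambda')s_i = (\omega_i+\omega_i') + (d_i+d_i')g_i$, so the sum lies in $\M$. For left multiplication by $r \in \R$, use $(r\lambda)s_i = r(\lambda s_i) = r\omega_i + (rd_i)g_i$, which works even though $\R$ is non-commutative because the multiplication $r\lambda$ appears on the left of $s_i$. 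Hence $\M$ is an $\R$-submodule of $\R^{\ell+1}$. Freeness then comes for free: the preliminaries note that $\R$ is a left PID, so any submodule of the free module $\R^{\ell+1}$ is itself free of rank at most $\ell+1$.

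To show that the rows of $M$ form a basis, I would first check membership in $\M$. The top row $(1,s_1,\dots,s_\ell)$ trivially satisfies $1 \cdot s_i \equiv s_i \bmod g_i$, while the $i$-th row (for $i\ge 1$) is $g_i$ times the $i$-th standard basis vector, which has $\lambda = 0$ and $\omega_j = 0$ for $j \neq i$ and $\omega_i = g_i \equiv 0 \bmod g_i$. For spanning, take an arbitrary $(\lambda,\omega_1,\dots,\omega_\ell) \in \M$ and write $\lambda s_i = \omega_i + d_i g_i$ for unique $d_i \in \R$ (existence of $d_i$ follows from the congruence; uniqueness from $g_i$ being non-zero, which we may assume without loss of generality). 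Then the vector equals $\lambda \cdot (\text{row } 0) - \sum_{i=1}^{\ell} d_i \cdot (\text{row } i)$; a quick coordinate check confirms this, using again that $\lambda$ and $d_i$ act from the left.

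Linear independence is immediate from the block-triangular structure of $M$. Suppose $a_0(1,s_1,\dots,s_\ell) + \sum_{i=1}^{\ell} a_i (0,\dots,g_i,\dots,0) = \ve 0$. The $0$-th coordinate gives $a_0 = 0$, and then coordinate $i$ gives $a_i g_i = 0$, forcing $a_i = 0$ since $\R$ has no zero divisors and $g_i \neq 0$. Combined with the previous step this shows the rows form a basis.

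I do not anticipate a serious obstacle: the only potential pitfall is staying consistent about left versus right multiplication, since the definition of $\equiv \bmod g_i$ uses right multiplication by $g_i$ while the module structure on $\M$ is the left one. The argument for closure under left multiplication by $r$ above illustrates the key compatibility, and the same care needs to be exercised when expanding $\lambda \cdot (\text{row }0)$ to verify spanning.
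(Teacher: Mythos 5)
Your proof is correct and is essentially the argument the paper has in mind: the paper's own proof simply asserts closure under addition and left multiplication and defers the basis claim to the analogous $\F[x]$ argument in the cited reference, which is exactly what you write out in full (membership of the rows, the telescoping $\lambda\cdot\text{row}_0 - \sum_i d_i\cdot\text{row}_i$ for spanning, and independence from the triangular structure plus $\R$ having no zero divisors). Your attention to the compatibility of left multiplication with the right-modulo relation is the one point of substance, and you handle it correctly.
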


\begin{proof}
Since $\M \subseteq \R^{\ell+1}$, the first half of the statement follows easily since $\M$ is clearly closed under addition and left $\R$ multiplication.
$M$ is a basis of $\M$ by arguments analogous to the $\F[x]$ case, cf. \cite[Lemma~1]{nielsen2013generalised}.
\end{proof}

\cref{lem:basis} gives a simple description of all solutions of the congruence requirement of \cref{prob:skewpade} in the form of the row space of an explicit matrix $M$.
The following theorem implies that computing the weak Popov form of $M$ is enough to solve \cref{prob:skewpade}.
The proof is similar to the $\K[x]$-case but since there is no convenient reference for it, we give it here for the $\R$ case.
The entire strategy is formalised in \cref{alg:shiftregisteralg}.

\begin{theorem}\label{thm:gab_rowred}
  Consider an instance of \cref{prob:skewpade}, and $\M$ as in \eqref{eq:module_vectors}.
  Let $\w = (\gamma_0,\dots,\gamma_\ell) \in \NN^{\ell+1}$.
If $V$ is a basis of $\M$ in $\w$-shifted weak Popov form, the row $\v$ of $V$ with $\LP{\Phi_\w(\v)}=0$ is a solution to \cref{prob:skewpade}.
\end{theorem}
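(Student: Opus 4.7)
The plan is to translate the theorem's claims into properties of the shifted matrix $\Phi_\w(V)$ and then apply \cref{lem:minpop} directly. Since $\Phi_\w$ is a left $\R$-linear map (multiplication by $x^{w_i}$ is on the right in each coordinate, which commutes with the left $\R$-action on rows), the rows of $\Phi_\w(V)$ generate $\Phi_\w(\M)$ as a left $\R$-module, and by hypothesis $\Phi_\w(V)$ is in weak Popov form.

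First I would verify that the candidate $\v$ is a legal solution to \cref{prob:skewpade}. Since $\v$ is a row of $V$ and $V$ is a basis of $\M$, the components $(\lambda,\omega_1,\dots,\omega_\ell) = \v$ automatically satisfy the congruence relations $\lambda s_i \equiv \omega_i \mod g_i$ defining $\M$ in~\eqref{eq:module_vectors}. The degree inequality is a direct unfolding of the LP condition: $\Phi_\w(\v) = (\lambda x^{\gamma_0},\omega_1 x^{\gamma_1},\dots,\omega_\ell x^{\gamma_\ell})$, and $\LP{\Phi_\w(\v)} = 0$ means position $0$ is the (unique rightmost) maximal-degree coordinate, i.e.\ $\deg\lambda + \gamma_0 > \deg\omega_i + \gamma_i$ for all $i \geq 1$, which is exactly what \cref{prob:skewpade} demands.

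Second, I would establish minimality of $\deg\lambda$. Let $\v' = (\lambda',\omega_1',\dots,\omega_\ell')$ be any solution to \cref{prob:skewpade}. Then $\v' \in \M$, so $\Phi_\w(\v') \in \Phi_\w(\M)$, and the defining degree inequalities force $\LP{\Phi_\w(\v')} = 0$. Applying \cref{lem:minpop} to the weak-Popov basis $\Phi_\w(V)$ of $\Phi_\w(\M)$ with $\u := \Phi_\w(\v')$, one gets $\deg\Phi_\w(\v') \geq \deg\Phi_\w(\v)$. Because the leading positions of both $\Phi_\w(\v)$ and $\Phi_\w(\v')$ equal $0$, these degrees are $\deg\lambda + \gamma_0$ and $\deg\lambda' + \gamma_0$ respectively, yielding $\deg\lambda' \geq \deg\lambda$ as required.

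The only point that needs a little care is the existence of a row $\v$ of $V$ with $\LP{\Phi_\w(\v)} = 0$, which is implicit in the statement. The argument is that \cref{prob:skewpade} always admits at least one solution (e.g.\ take $\lambda$ of sufficiently high degree; then each $\omega_i := (\lambda s_i) \modop g_i$ has $\deg\omega_i < \deg g_i$, which for $\deg\lambda$ large enough guarantees the shifted degree inequalities), so $\Phi_\w(\M)$ contains a non-zero element of leading position $0$. Applying \cref{lem:minpop} in the contrapositive form (as is standard for weak Popov bases: every leading position realised in the module is realised by some basis row) then produces the desired row. I expect this existence bookkeeping, together with confirming that $\Phi_\w$ really is a left $\R$-linear bijection onto $\Phi_\w(\M)$ in the non-commutative setting, to be the only subtle step; everything else is a direct transcription of the $\K[x]$ argument in~\cite{nielsen2013generalised}.
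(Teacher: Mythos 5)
Your proposal is correct and follows essentially the same route as the paper's proof: congruence from membership in $\M$, the observation that the degree restriction of \cref{prob:skewpade} is equivalent to $\LP{\Phi_\w(\cdot)}=0$, and minimality of $\deg\lambda$ via \cref{lem:minpop}. The only addition is your explicit existence argument for a row with leading position $0$, which the paper leaves implicit (it also follows directly from $\M$ having rank $\ell+1$, so all $\ell+1$ leading positions occur among the rows of $\Phi_\w(V)$).
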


\begin{proof}
  By \cref{lem:basis} the row $\v$ satisfies the congruence requirement of \cref{prob:skewpade}.
  For the degree restriction of \cref{prob:skewpade}, note that any $\u \in \M$ satisfies this restriction if and only if $\LP{\Phi_\w(\u)} = 0$, since $\deg u_i + \gamma_i = \deg(\Phi_\w(\u)_i)$.
  Furthermore, if this is the case, then $\deg(\Phi_\w(\u)) = \deg u_0 + \gamma_0$.
  Thus, not only must $\v$ satisfy the degree restriction, but by \cref{lem:minpop}, $v_0$ also has minimal possible degree.
\end{proof}

\begin{algorithm}[H]
  \caption{Solve \cref{prob:skewpade} by Row Reduction}
  \label{alg:shiftregisteralg}
  \begin{algorithmic}[1]
    \Require{Instance of \cref{prob:skewpade}}.
    \Ensure{Solution $\v = (\lambda,\omega_1,\dots,\omega_\ell)$ of \cref{prob:skewpade}.}
    \State Set up $M$ as in \eqref{eqn:M}.
    \State Compute $V$ as a $\w$-shifted weak Popov form of $M$.
      \label{line:shiftregister:wpf}
    \State \Return the row $\v$ of $V$ having $\LP{\Phi_\w(\v)} = 0$.
  \end{algorithmic}
\end{algorithm}

The complexity of \cref{alg:shiftregisteralg} is determined by \cref{line:shiftregister:wpf}.
Therefore, in Sections~\ref{sec:RowReduction} and \ref{ssec:DD} we analyse how and in which complexity we can row-reduce $\R$-matrices.
In particular, we prove the following statement, where $\param := \max_i \{\gamma_i +\deg g_i\}$.

\begin{theorem}
  \label{thm:shiftregister_complexity}
\cref{alg:shiftregisteralg} has complexity $%
\begin{cases}
O(\ell \param^2), &\text{if } g_i = x^{t_i}+c_i, \, t_i \in \N, \, c_i \in \K \; \forall \, i, \\
O(\ell^2 \param^2), &\text{otherwise.}
\end{cases}$
\end{theorem}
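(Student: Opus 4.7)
The plan is to reduce the theorem to the row-reduction complexity analyses carried out in \cref{sec:RowReduction} and \cref{sec:FasterRowRed}, applied to the specific matrix $M$ built in \cref{alg:shiftregisteralg}. I would first argue that the cost of \cref{alg:shiftregisteralg} is dominated by \cref{line:shiftregister:wpf}: setting up $M$ in line 1 merely places the given $s_i$ and $g_i$ into an $(\ell+1) \times (\ell+1)$ sparse template, and returning the appropriate row in line 3 is a scan of the $\ell+1$ rows of $V$, so both steps are negligible compared to any non-trivial row-reduction cost.

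Next I would control the ``size'' of $M$ under the shift $\w = (\gamma_0,\ldots,\gamma_\ell)$. Using the normalisation $\deg s_i \le \deg g_i$ (assumed just after \cref{prob:skewpade}), every non-zero entry of $\Phi_\w(M)$ has degree at most $\param$: in row $0$ the entries $1$ and $s_i$ contribute shifted degrees $\gamma_0$ and $\deg s_i + \gamma_i \le \deg g_i + \gamma_i \le \param$; for $i \ge 1$ the only non-zero entry is $g_i$ with shifted degree $\deg g_i + \gamma_i \le \param$. Thus both the number of rows is $\ell+1$ and the max-degree of $\Phi_\w(M)$ is at most $\param$, which are the two parameters that will enter the row-reduction bounds.

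For the general case I would invoke the complexity analysis of the skew Mulders--Storjohann-type algorithm developed in \cref{sec:RowReduction}, instantiated on an $(\ell+1) \times (\ell+1)$ matrix of max-degree $\param$ under an $\w$-shift of size at most $\param$; this yields the stated $O(\ell^2 \param^2)$ bound. For the special case $g_i = x^{t_i}+c_i$, I would instead apply the refined row-reduction algorithm from \cref{sec:FasterRowRed}, which exploits exactly the near-monomial diagonal structure of $M$ (and the sparsity of the rest of $M$) to shave a factor of $\ell$ off the generic estimate, producing the $O(\ell \param^2)$ bound.

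The work in this proof is essentially bookkeeping: the real content lies in \cref{sec:RowReduction} and \cref{sec:FasterRowRed}, where the row-reduction algorithms and their complexities are established. The one thing that needs care is checking that the structural hypothesis used by the faster algorithm of \cref{sec:FasterRowRed} is genuinely satisfied by the specific $M$ in \eqref{eqn:M}, in particular that the binomial form $g_i = x^{t_i}+c_i$ is the precise hypothesis the refined algorithm needs to maintain throughout its execution under shifted weak-Popov row operations; this is the one place where the two sections must be matched carefully rather than cited verbatim.
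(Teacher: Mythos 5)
Your overall strategy matches the paper's: the proof is a reduction to the row-reduction complexity results, with \cref{thm:dd_fastonsimple} (the demand-driven \cref{alg:dd}) handling the case $g_i = x^{t_i}+c_i$ and the Mulders--Storjohann analysis handling the general case. The special-case half of your argument is fine, and your caveat about matching the binomial hypothesis to what \cref{alg:dd} actually needs is well placed.

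The general case, however, has a genuine gap. You claim that knowing $\Phi_\w(M)$ is an $(\ell+1)\times(\ell+1)$ matrix of max-degree at most $\param$ ``yields the stated $O(\ell^2\param^2)$ bound''. It does not: the straightforward bound for \cref{alg:ms} is $O(m^2 \deg V \maxdeg V)$, and with $m=\ell+1$, $\deg V \le (\ell+1)\param$ and $\maxdeg V \le \param$ this gives only $O(\ell^3\param^2)$ --- the paper explicitly notes this at the end of \cref{ex:MS_shift_register_input}. The refined bound of \cref{thm:ms_complexity_general} is $O(m^2\,\OD{V}\,\maxdeg V)$, so to obtain $O(\ell^2\param^2)$ you must additionally show $\OD{\Phi_\w(M)} \in O(\param)$. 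That is the actual content of \cref{ex:MS_shift_register_input}: since $\Phi_\w(M)$ is upper triangular, $\deg\det\Phi_\w(M) = \sum_{i\ge 1}\deg g_i + \sum_i \gamma_i$ by \cref{cor:degdet}, while the row degrees satisfy $\deg\Phi_\w(\m_0) = \max_i\{\gamma_i+\deg s_i\}\le\param$ and $\deg\Phi_\w(\m_i)=\gamma_i+\deg g_i$ for $i\ge 1$, whence $\OD{\Phi_\w(M)} \le \param-\gamma_0$. Without this orthogonality-defect computation (which rests on the determinant-degree machinery of \cref{prop:degdet} and \cref{prop:ODzero}), your two parameters --- row count and max-degree --- are not enough, and the exponent on $\ell$ comes out one too high.
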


\begin{proof}
The first case follows from \cref{thm:dd_fastonsimple} in \cref{ssec:DD}, using \cref{alg:dd} for the row reduction step.
For general $g_i$'s, the result of \cref{ex:MS_shift_register_input} in \cref{sec:RowReduction} holds, which estimates the complexity of \cref{alg:ms} for a shift-register input.
\end{proof}

The above theorem applies well to decoding Gabidulin and Interleaved Gabidulin codes since the $g_i$ are often in the restricted form: specifically, $g_i$ is a power of $x$ in syndrome Key Equations, while $g_i = x^n - 1$ in Gao Key Equation whenever $n \mid s$. 
We therefore achieve the same complexity as \cite{sidorenko2011skew} but in a wider setting.

\subsection{Decoding Mahdavifar--Vardy Codes}\label{sec:mv_codes}

Mahdavifar--Vardy (MV) codes \cite{mahdavifar2012list,mahdavifar2013algebraic} are subspace codes constructed by evaluating powers of skew polynomials at certain points.
We will describe how one can use row reduction to carry out the most computationally intensive step of the MV decoding algorithm given in \cite{mahdavifar2013algebraic}, the Interpolation step.
In this section, $\R = \Fqs[x;\theta]$ where $\theta$ is some power of the Frobenius automorphism of $\Fqs/\,\Fq$.

\begin{problem}[Interpolation Step of MV decoding]
  \label{prob:mv_interpolation}
  Let $\ell, k, s, n \in \N$ be such that $\binom{\ell+1} 2 (k-1) < n \leq s$.
  Given $(x_i, y_{i,1},\ldots, y_{i,\ell}) \in \Fqs^{\ell+1}$ for $i=1,\ldots,n$, where the $x_i$ are linearly independent over $\Fq$, 
  find a non-zero $\vec Q \in \R^{\ell+1}$ satisfying:
\begin{IEEEeqnarray}{rCl+l}
Q_0(x_i) + \sum_{t=1}^\ell Q_t(y_{i,t}) & = & 0                          & i=1,\ldots,n, \label{eq:interpol_zero} \\
\deg Q_t                          & < & \omegaparamMV - t(k-1)  & t=0,\dots,\ell, \label{eq:interpol_deg}
\end{IEEEeqnarray}
where $\omegaparamMV$ is given by
\begin{equation*}
 \omegaparamMV = \left\lceil \frac{n+1}{\ell+1} + \frac{1}{2}\ell(k-1) \right\rceil \\ 
\end{equation*}
\end{problem}

The problem can be solved by a large linear system of equations whose dimensions reveals that a solution always exists \cite[Lemma~8]{mahdavifar2013algebraic}.
Note that the requirement $n > \binom{\ell+1} 2 (k-1)$ ensures that all the degree bounds \eqref{eq:interpol_deg} are non-negative.

Let $\M$ be the set of all $\vec Q$ that satisfy \eqref{eq:interpol_zero} though not necessarily \eqref{eq:interpol_deg}:
\begin{equation}\label{eq:module_MV}
  \textstyle
  \M = \big\{ \vec Q \in \R^{\ell+1} \ \big|\ Q_0(x_i) + \sum_{t=1}^\ell Q_t(y_{i,t}) = 0 \quad i=1,\ldots,n \big\}
\end{equation}

\begin{lemma}
  \label{lem:q_module}
  Consider an instance of \cref{prob:mv_interpolation}. Then $\M$ of \eqref{eq:module_MV} is a left $\R$-module.
\end{lemma}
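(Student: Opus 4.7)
The plan is to verify the two defining closure properties of a left $\R$-submodule of the free module $\R^{\ell+1}$: closure under component-wise addition and closure under left scalar multiplication by arbitrary elements of $\R$.

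First I would observe that for any $\alpha \in \Fqs$ the evaluation map $\ev_\cdot(\alpha) : \R \to \Fqs$, $Q \mapsto Q(\alpha)$, is a group homomorphism on $(\R,+)$. This is immediate from the explicit formula $Q(\alpha) = \sum_i Q_i \theta^i(\alpha)$ stated in \cref{ssec:skewpolys}. Applying this coordinatewise, the $n$ defining equations of \eqref{eq:module_MV} are $\Fqs$-linear functionals on $\R^{\ell+1}$, so their common zero set $\M$ is closed under component-wise addition and contains the zero vector.

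Next I would handle left multiplication by an arbitrary $a \in \R$. Take $\vec Q \in \M$ and consider $a \vec Q := (a Q_0, a Q_1, \dots, a Q_\ell)$. The key ingredient is the composition law $\ev_{ab} = \ev_a \circ \ev_b$ from \cref{ssec:skewpolys}, which gives $(a Q_t)(\beta) = a(Q_t(\beta))$ for every $\beta \in \Fqs$. Therefore, for each $i=1,\ldots,n$,
\begin{align*}
 (a Q_0)(x_i) + \sum_{t=1}^\ell (a Q_t)(y_{i,t})
   &= a\!\left(Q_0(x_i)\right) + \sum_{t=1}^\ell a\!\left(Q_t(y_{i,t})\right) \\
   &= a\!\left(Q_0(x_i) + \sum_{t=1}^\ell Q_t(y_{i,t})\right) = a(0) = 0,
\end{align*}
where the second equality uses that $\ev_a$ is additive on $(\Fqs,+)$ and the third uses $\vec Q \in \M$. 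Hence $a \vec Q \in \M$.

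Combining the two closure properties shows that $\M$ is a left $\R$-submodule of $\R^{\ell+1}$, which is what we needed. There is no real obstacle here; the only point requiring a moment's thought is that left scalar multiplication by $a$ inside $\R$ corresponds, under evaluation, to postcomposition with the additive map $\ev_a$, which is precisely the content of the composition law $\ev_{ab} = \ev_a \circ \ev_b$.
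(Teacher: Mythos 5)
Your proposal is correct and follows essentially the same route as the paper: closure under addition via additivity of the evaluation map, and closure under left multiplication via the composition law $\ev_{ab} = \ev_a \circ \ev_b$ together with additivity of $\ev_a$ and $a(0)=0$. The paper's proof is just a terser version of the same argument.
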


\begin{proof}
$\M$ is closed under addition since $a(\alpha) + b(\alpha) = (a+b)(\alpha)$ for all $a,b \in \R$ and $\alpha \in \Fqs$.
Let $f \in \R$, $\vec Q = (Q_0,Q_1,\dots,Q_\ell) \in \M$. Then $f \mul \vec Q$ satisfies \eqref{eq:interpol_zero} since
\begin{align*}
(f \mul Q_0)(x) + \sum_{i=1}^{\ell} (f \mul Q_i)(y_i) = f\Big(Q_0(x) + \sum_{i=1}^{\ell} Q_i(y_i)\Big) = f(0) = 0 \ .
  \\[-3em]
\end{align*}
\end{proof}

\noindent
For explicitly describing a basis of $\M$, we need a few well-known technical elements:
\begin{definition}
  Given $a_1,\ldots,a_m \in \Fqs$ which are linearly independent over $\Fq$, the \emph{annihilator polynomial} of the $a_i$ is the monic non-zero $\AP \in \R$ of minimal degree such that $\AP(a_i) = 0$ for all $i$.
\end{definition}

It is easy to show that the annihilator polynomial is well-defined and that $\deg \AP = m$, see e.g.~\cite{ore_special_1933}.
The existence of annihilator polynomials easily leads to the following analogue of Lagrange interpolation:
\begin{lemma}[Interpolation polynomial]
  Given any $a_1,\dots,a_m \in \Fqs$ which are linearly independent over $\Fq$, and arbitrary $b_1,\dots,b_m \in \Fqs$, there exists a unique $R \in \R$ of degree at most $m-1$ such that $R(a_i) = b_i$ for all $i=1,\ldots,m$.
\end{lemma}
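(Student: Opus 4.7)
The plan is to set up the evaluation-at-$(a_1,\dots,a_m)$ map on polynomials of bounded degree and show it is an $\Fqs$-linear bijection between two spaces of equal dimension; existence and uniqueness then fall out simultaneously.

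Concretely, let $\R_{<m} \subset \R$ be the $\Fqs$-vector space of skew polynomials of degree less than $m$, which has $\Fqs$-dimension $m$. Define
\begin{equation*}
  \phi \, : \, \R_{<m} \to \Fqs^m, \qquad R \mapsto \bigl(R(a_1),\dots,R(a_m)\bigr).
\end{equation*}
Since each coordinate $R \mapsto R(a_j)$ is $\Fqs$-linear in the coefficients of $R$ (the coefficients lie in $\Fqs$ and the evaluation is just a fixed $\Fqs$-linear combination of those coefficients), $\phi$ is an $\Fqs$-linear map between two $\Fqs$-vector spaces of the same dimension $m$. Thus it suffices to show $\phi$ is injective; surjectivity then gives existence, while injectivity gives uniqueness.

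For injectivity, I would use the annihilator polynomial $\AP$ of $a_1,\dots,a_m$ introduced just above in the excerpt. Suppose $R \in \ker \phi$, so $R(a_i) = 0$ for all $i$. Because $\ev_R$ is $\Fq$-linear (it is a group homomorphism on $(\Fqs,+)$ and $\Fq$ is contained in the fixed field of $\theta$), $R$ vanishes on the entire $\Fq$-span of $a_1,\dots,a_m$. This $\Fq$-subspace has dimension $m$ because the $a_i$ are $\Fq$-linearly independent, and any skew polynomial vanishing on it is a left multiple of $\AP$, by the defining minimality property of the annihilator polynomial combined with left Euclidean division in $\R$. Since $\deg \AP = m$ but $\deg R < m$, we conclude $R = 0$.

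The only step that requires a moment of care is deriving ``$R$ vanishes on an $m$-dimensional $\Fq$-subspace $\Rightarrow$ $R$ is a left multiple of $\AP$''; this is the standard argument using left Euclidean division, $R = Q \AP + S$ with $\deg S < \deg \AP$, noting that $\ev_{Q\AP} = \ev_Q \circ \ev_\AP$ by the composition property recalled in Section~\ref{ssec:skewpolys}, so $S$ also vanishes on the subspace, contradicting minimality of $\AP$ unless $S = 0$. Once this is in hand, the dimension count completes the proof with no further computation.
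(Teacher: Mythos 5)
Your proof is correct, and since the paper gives no proof of this lemma (it only remarks that the existence of annihilator polynomials ``easily leads to'' it), your dimension-count argument --- the $\Fqs$-linear evaluation map $\R_{<m}\to\Fqs^m$ is between spaces of equal dimension, and its kernel is trivial because a non-zero $R$ of degree less than $m$ vanishing at all $a_i$ would, after normalisation, contradict the minimality of $\AP$ --- is exactly the standard argument the authors are alluding to. All the ingredients you invoke ($\Fq$-linearity of $\ev_R$, the composition rule $\ev_{Q\AP}=\ev_Q\circ\ev_\AP$, left Euclidean division, $\deg\AP=m$) are available in the paper, so there is no gap.
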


\begin{lemma}
  \label{lem:mv_basis}
  Consider an instance of \cref{prob:mv_interpolation} and let $\M$ be as in \eqref{eq:module_MV}. 
  Denote by $G$ the annihilator polynomial of the $x_i, i=1,\ldots,n$, and let $R_t \in \R, t=1,\ldots,\ell$ be the interpolation polynomial with $R_t(x_i) = y_{i,t}$ for $i=1,\dots,n$.
  The rows of $M$ form a basis of $\M$:
\begin{align}
M =
\begin{pmatrix}
\m_0 \\
\m_1 \\
\m_2 \\
\vdots \\
\m_\ell
\end{pmatrix}
=
\begin{pmatrix}
G 		& 0 & 0 & \dots & 0 \\
-R_1	& 1	& 0 & \dots & 0 \\
-R_2	& 0 & 1	& \dots & 0 \\
\vdots	& \vdots & \vdots & \ddots & \vdots \\
-R_\ell	& 0 & 0 & \dots & 1
\end{pmatrix}
\label{eq:M_MV}
\end{align}
\end{lemma}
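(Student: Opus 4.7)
The plan is to verify that each row of $M$ lies in $\M$, then show every element of $\M$ is a left $\R$-linear combination of these rows, and finally note linear independence. The key algebraic facts I will use are that $\ev_{ab} = \ev_a \circ \ev_b$, that $\R$ is a left PID (so the left ideal of polynomials vanishing on all $x_i$ is generated by $G$), and the defining properties of $G$ and of the interpolation polynomials $R_t$.

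First I would check membership. The row $\m_0 = (G,0,\ldots,0)$ satisfies \eqref{eq:interpol_zero} immediately since $G(x_i)=0$ for all $i$ by definition of the annihilator polynomial. For $t \ge 1$, the row $\m_t$ has $-R_t$ in position $0$, a $1$ in position $t$, and zeros elsewhere, so the left-hand side of \eqref{eq:interpol_zero} evaluates to $-R_t(x_i) + y_{i,t} = 0$ by the defining property of $R_t$. Hence every row lies in $\M$, and by \cref{lem:q_module} the row space of $M$ is contained in $\M$.

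Next I would prove generation. Given an arbitrary $\vec Q = (Q_0,\ldots,Q_\ell) \in \M$, I form
\[
\vec Q' \;=\; \vec Q - \smallsum_{t=1}^\ell Q_t \cdot \m_t.
\]
Left multiplication on $\m_t$ yields $Q_t \cdot \m_t = (-Q_t R_t, 0,\ldots, Q_t,\ldots,0)$ (with $Q_t$ in position $t$), so $\vec Q'$ has zeros in positions $1,\dots,\ell$ and its $0$-th entry equals $P := Q_0 + \sum_{t=1}^\ell Q_t R_t$. Using $\ev_{Q_t R_t} = \ev_{Q_t} \circ \ev_{R_t}$ together with $R_t(x_i)=y_{i,t}$, I obtain
\[
P(x_i) \;=\; Q_0(x_i) + \smallsum_{t=1}^\ell Q_t(y_{i,t}) \;=\; 0 \quad \text{for all } i=1,\ldots,n,
\]
because $\vec Q \in \M$. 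Thus $P$ annihilates all $x_i$. The set of such polynomials is a left ideal of $\R$; since $\R$ is a left PID and $G$ is the monic generator of minimal degree of this ideal, $P = hG$ for some $h \in \R$. Hence $\vec Q' = h \cdot \m_0$ and $\vec Q = h \cdot \m_0 + \sum_{t=1}^\ell Q_t \cdot \m_t$ lies in the row space of $M$.

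Finally, for linear independence, suppose $\sum_{t=0}^\ell a_t \m_t = \vec 0$. Inspecting column $t$ for each $t \ge 1$ gives $a_t = 0$ directly, and then column $0$ reduces to $a_0 G = 0$, forcing $a_0 = 0$ since $G \neq 0$ and $\R$ has no zero divisors. The only mildly subtle point is keeping track of the non-commutativity in the generation step—specifically, that left multiplication by $Q_t$ correctly distributes over the entries of $\m_t$ and that ``$G$ divides $P$'' must be interpreted as $P \in \R G$ (left-multiple), which is the form guaranteed by the left PID property.
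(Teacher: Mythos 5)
Your proof is correct and follows essentially the same route as the paper's: verify that each row of $M$ lies in $\M$, reduce an arbitrary $\vec Q \in \M$ by left multiples of $\m_1,\dots,\m_\ell$ to a vector supported only on position $0$, and conclude that the remaining entry is a left multiple of $G$ by minimality of the annihilator polynomial. The only differences are cosmetic---you do the elimination in one step rather than iteratively, you phrase divisibility via the left-ideal/left-PID property instead of explicit right division with remainder, and you add the linear-independence check, which the paper leaves implicit.
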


\begin{proof}
``$\subseteq$'':
We should show that each $\vec m_j$ all ``vanish'' at the points $(x_i,y_{i,1},\ldots,y_{i,\ell})$.
Consider such a point; we have two cases:
\begin{IEEEeqnarray*}{r'l'l}
\m_0: & G(x_i) = 0 \\
\m_t: & 1(y_{i,t})-R_t(x_i) = y_{i,t}-R_t(x_i) = 0 \ ,  & t=1,\dots,\ell
\end{IEEEeqnarray*}

``$\supseteq$'':
Consider some $\vec Q = (Q_0,\dots,Q_\ell) \in \M$.
Then we can write
\begin{IEEEeqnarray*}{rCrClCl}
  \v_\ell &:=& \vec Q \\
  \v_{\ell-1} &:=& \v_\ell &-& v_{\ell,\ell} \mul \m_\ell &=& (v_{\ell-1, 0},\ldots,v_{\ell-1,\ell-1}, 0) \\
  \v_{\ell-2} &:=& \v_{\ell-1} &-& v_{\ell-1,\ell-1} \mul \m_{\ell-1} &=& (v_{\ell-2,0}, \ldots, v_{\ell-2,\ell-2},0, 0) \\
  & \; \; \, \vdots \\
  \v_{0} &:=& \v_1 &-& v_{1,1} \mul \m_1 &=& (v_{0,0}, 0, \dots, 0).
\end{IEEEeqnarray*}
Since $\v_\ell \in \M$, and each $\m_t \in \M$, we conclude that all the $\v_t \in \M$ and in particular $\v_0 \in \M$.
Thus for any $i$ we must have $v_{0,0}(x_i) = 0$.
This means $G$ must right-divide $v_{0,0}$: for otherwise, the division would yield a non-zero remainder $B \in \R$ with $\deg B < \deg G$ but still having $B(x_i) = 0$, contradicting the minimality of $G$.

Summarily, $\v_0 = f\mul\m_0$ for some $f \in \R$, and hence $\vec Q = \v_\ell$ is an $\R$-linear combination of the rows of $M$.
\end{proof}

To complete the interpolation step, we need to find an element of $\M$ whose components satisfy the degree constraints \eqref{eq:interpol_deg}.

\begin{theorem}
  \label{thm:mv_wpfsolves}
  Consider an instance of \cref{prob:mv_interpolation}, and let $\M$ be as in \eqref{eq:module_MV}.
  Let $\w = (0,(k-1), \dots, \ell(k-1))$, and $V$ be a basis of $\M$ in $\w$-shifted weak Popov form.
  If $\v$ is a row of $V$ with minimal $\w$-shifted degree, $\deg \Phi_\w(\v)$, then $\vec v$ is a
  solution to \cref{prob:mv_interpolation}.
\end{theorem}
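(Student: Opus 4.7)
The plan is to mimic the proof of \cref{thm:gab_rowred} but with the degree bound being a single uniform bound on $\deg \Phi_\w(\v)$ rather than just a requirement on the leading position.

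First I would verify membership: since $V$ is a basis of $\M$, the row $\vec v$ trivially lies in $\M$ and hence satisfies the interpolation conditions \eqref{eq:interpol_zero} by construction of $\M$.

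The main content is the degree bound \eqref{eq:interpol_deg}. I would translate it into the $\w$-shift: the requirement $\deg Q_t < \omegaparamMV - t(k-1)$ for all $t=0,\ldots,\ell$ is equivalent to $\deg(\Phi_\w(\vec Q)_t) = \deg Q_t + t(k-1) < \omegaparamMV$ for all $t$, i.e.\ $\deg \Phi_\w(\vec Q) < \omegaparamMV$. So \cref{prob:mv_interpolation} is solvable if and only if there exists a non-zero $\vec Q \in \M$ with $\deg \Phi_\w(\vec Q) < \omegaparamMV$, and I need only show that the minimal-$\w$-shifted-degree row $\vec v$ of $V$ meets this bound.

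The existence of such a solution is already known: the linear-algebra dimension count of \cite[Lemma~8]{mahdavifar2013algebraic} guarantees some non-zero $\vec Q^\star \in \M$ with $\deg \Phi_\w(\vec Q^\star) < \omegaparamMV$. Now $\Phi_\w(V)$ is a basis of $\Phi_\w(\M)$ in weak Popov form, and $\Phi_\w(\vec Q^\star) \in \Phi_\w(\M)$. Let $j := \LP{\Phi_\w(\vec Q^\star)}$ and let $\v^\star$ be the (unique) row of $V$ such that $\LP{\Phi_\w(\v^\star)} = j$; such a row exists because $\Phi_\w(V)$ is in weak Popov form and contains a row with leading position $j$. \cref{lem:minpop} applied to $\Phi_\w(V)$ and $\Phi_\w(\M)$ then gives
\[
  \deg \Phi_\w(\v^\star) \;\le\; \deg \Phi_\w(\vec Q^\star) \;<\; \omegaparamMV.
\]
Hence at least one non-zero row of $V$ has $\w$-shifted degree less than $\omegaparamMV$, and so does the row $\vec v$ of minimal $\w$-shifted degree. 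Unwinding the shift yields the per-component bounds \eqref{eq:interpol_deg}, and $\vec v \ne 0$ because its $\w$-shifted degree is finite. The only subtle point is the equivalence between the componentwise bound \eqref{eq:interpol_deg} and the single $\w$-shifted degree bound, which is immediate from the definition of $\Phi_\w$; otherwise the argument is a direct application of \cref{lem:minpop}.
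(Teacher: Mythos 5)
Your proposal is correct and follows essentially the same route as the paper's proof: membership in $\M$ gives \eqref{eq:interpol_zero}, the existence of some solution $\vec Q^\star$ satisfying \eqref{eq:interpol_deg} is taken from the dimension count of \cite[Lemma~8]{mahdavifar2013algebraic}, and \cref{lem:minpop} (applied via the row of $\Phi_\w(V)$ with matching leading position) bounds $\deg\Phi_\w(\v)$ by $\deg\Phi_\w(\vec Q^\star) < \omegaparamMV$, which unwinds to the componentwise bounds. You are in fact slightly more explicit than the paper in introducing the intermediate row $\v^\star$ and in noting that non-zeroness follows from $V$ being a basis.
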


\begin{proof}
Any row of $V$ satisfies \eqref{eq:interpol_zero} because it is in $\M$.
As previously remarked, there exists some solution $\vec Q = (Q_0, Q_1,\dots,Q_\ell) \in \M$ satisfying the degree conditions \eqref{eq:interpol_deg}.
By the choice of $\v$ and by \Fref[vario]{lem:minpop}, then $\deg \Phi_\w(\v) \leq \deg \Phi_\w(\vec Q)$.
But then if $t = \LP{\Phi_\w(\vec Q)}$ we have that for any $i$:
\begin{align*}
  &\deg \left(v_i x^{i(k-1)}\right) \leq \deg \Phi_\w(\vec Q) = \deg\left(Q_t x^{t(k-1)}\right) < \chi
\end{align*}
Hence, $\v$ satisfies \eqref{eq:interpol_deg}.
\end{proof}

This results immediately in the decoding procedure outlined as \cref{alg:mv_interpol}.

\begin{algorithm}[H]
  \caption{MV Interpolation Step by Row Reduction}
  \label{alg:mv_interpol}
  \begin{algorithmic}[1]
    \Require{An instance of \cref{prob:mv_interpolation}}
    \Ensure{A vector $\vec Q \in \R^{\ell+1}$ solving \cref{prob:mv_interpolation}.}
    \State Set up $M$ as in \eqref{eq:M_MV}.
    \State Compute a $\w$-shifted weak Popov form $V$ of $M$.
      \label{line:mv_interpol}
    \State \Return the row $\v$ of $V$ which has minimal $\w$-shifted degree $\deg \Phi_\w(\v)$.
  \end{algorithmic}
\end{algorithm}

\begin{theorem}
  \label{thm:mv_complexity}
\cref{alg:mv_interpol} has complexity $O(\ell n^2)$ over $\Fqs$.
\end{theorem}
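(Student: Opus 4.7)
The plan is to bound each step of \cref{alg:mv_interpol} separately and show that each costs $O(\ell n^2)$ operations in $\Fqs$.

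Setup (Line~1). Computing the annihilator polynomial $G$ of the $n$ given $\Fq$-linearly independent points $x_1,\dots,x_n$, of degree $n$, can be done iteratively in $O(n^2)$ operations by a linearized analogue of building up a product polynomial root by root. Each of the $\ell$ interpolation polynomials $R_t$, of degree at most $n-1$, can similarly be computed in $O(n^2)$ by a linearized Lagrange construction. This totals $O(\ell n^2)$.

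Row reduction (Line~2). This is the dominant cost. First I would observe that $\Phi_\w(M)$ has $\ell+1$ rows of max-degree at most $n$: indeed $\deg G = n$, $\deg R_t \le n-1$, and from the hypothesis $\binom{\ell+1}{2}(k-1) < n$ one gets $\ell(k-1) < 2n/(\ell+1) \le n$. I would then apply the Mulders--Storjohann algorithm from \cref{sec:RowReduction}, possibly in the refined form of \cref{ssec:wpfwalk}. The number of simple row reductions it performs is bounded by the decrease in the sum of row degrees: initially at most $(\ell+1)n$ and at termination equal to $\deg \det \Phi_\w(M) = n + \binom{\ell+1}{2}(k-1) = O(n)$, so $O(\ell n)$ iterations occur. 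The crucial structural input is the extreme sparsity of $M$: row $0$ has a single non-zero entry, while each row $t \ge 1$ has only two (one in column $0$ and one on the diagonal). Combining this sparsity with the invariant that no row degree ever exceeds $n$ during execution, an amortized analysis bounds the aggregate entry-level work at $O(\ell n^2)$, in direct analogy with the tighter analysis underlying \cref{thm:shiftregister_complexity}.

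Extraction (Line~3) just scans the $\ell+1$ output rows for the minimum $\w$-shifted degree, in $O(\ell)$ time, so the total is dominated by $O(\ell n^2)$.

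The main obstacle is the amortized cost bound in Line~2. A straightforward per-iteration estimate of $O(\ell n)$ would yield only $O(\ell^2 n^2)$ in total. To reach $O(\ell n^2)$ one must control how the sparsity pattern of $M$ evolves under the row operations of Mulders--Storjohann and argue that the total number of coefficient operations across all iterations telescopes; this parallels the refinement that improves the generic shift-register bound $O(\ell^2 \param^2)$ to the special-case $O(\ell \param^2)$ of \cref{thm:shiftregister_complexity}, with the role of the simple moduli $g_i = x^{t_i}+c_i$ being played here by the sparse rows $(-R_t, 0, \dots, 1, \dots, 0)$.
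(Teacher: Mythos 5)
Your handling of Lines~1 and~3 matches the paper's; the gap is in Line~2, and it is twofold. First, the iteration count: the number of simple LP-transformations performed by Mulders--Storjohann is \emph{not} bounded by the decrease in the sum of row degrees, because a transformation may leave the degree of the modified row unchanged and only lower its leading position. The correct generic bound (from \cref{thm:ms_complexity_general}, via the value function $\vecVal$) is $m(\OD{V}+m)$ with $m=\ell+1$ and $\OD{\Phi_\w(M)}\le \ell n$, i.e.\ $O(\ell^2 n)$ iterations, not $O(\ell n)$. Second, the amortized sparsity argument does not go through: the sparsity of $M$ is destroyed once rows $t\ge 1$ start combining with one another, and they do, because for generic $R_t$ of degree $n-1 > t(k-1)$ \emph{every} row of $\Phi_\w(M)$ initially has leading position $0$. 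There is therefore no analogue of the shift-register situation exploited by \cref{thm:dd_fastonsimple}, where only row $0$ is ever modified and the remaining rows keep a fixed two-entry shape. \cref{ex:MS_MV_complexity} records that plain Mulders--Storjohann on this input costs $O(\ell^3 n^2)$; you must save a factor $\ell^2$, and your sketch supplies no mechanism for doing so.

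The paper's mechanism is the ``weak Popov walking'' of \cref{ssec:wpfwalk}: $M$ is already in $\w'$-shifted weak Popov form for the larger shift $\w'=\w+(0,n,\ldots,n)$, and \cref{alg:mv_wpfwalk} walks the shift down to $\w$ in $n$ unit increments. Each step (\cref{alg:wpfwalk}) is a single, carefully ordered sweep of simple transformations whose cost is bounded by counting the monomials of a matrix with zero orthogonality defect (\cref{prop:wfpwalk_complexity}), giving $O(\ell n)$ per step and hence $O(\ell n^2)$ overall (\cref{thm:wpf_correctness}). To repair your outline you would need either this walk or some other concrete invariant controlling the per-coefficient work across iterations; the sparsity of the initial basis alone is not such an invariant.
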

\begin{proof}
  Computing $G$ can be done straightforwardly in $O(n^2)$ operations over $\Fqs$.
  Each $R_t$ can be computed in the same speed using a decomposition into smaller interpolations and two annihilator polynomials, see e.g.~\cite{puchinger_fast_2015}.
  For \cref{line:mv_interpol}, we use \cref{alg:mv_wpfwalk} whose complexity is $O(\ell n^2)$, proved as \cref{thm:wpf_correctness}. 
\end{proof}

In \cite{xie_linearized_2013}, Xie, Lin, Yan and Suter present an algorithm for solving the Interpolation Step using a skew-variant of the K\"otter--Nielsen--H\o{}holdt algorithm \cite{nielsen_decoding_1998} with complexity $O(\ell^2 s n)$ over $\Fqs$.
Since $n < s$, our algorithm is at least as fast as theirs.
Note that these costs probably dominate the complexity of MV decoding: the other step, Root-finding, likely\footnote{%
  In \cite{mahdavifar2013algebraic}, the claimed complexity of their root-finding is $O(\ell^{O(1)} k)$.
  However, we have to point out that the complexity analysis of that algorithm has severe issues which are outside the scope of this paper to amend.
  There are two problems: 1) It is not proven that the recursive calls will not produce many spurious ``pseudo-roots'' which are sifted away only at the leaf of the recursions; and 2) the cost analysis ignores the cost of computing the shifts $Q(X, Y^q + \gamma Y)$.
  Issue 1 is necessary to resolve for assuring polynomial complexity.
  For the original $\K[x]$-algorithm this is proved as \cite[Proposition 6.4]{roth_efficient_2000}, and an analogous proof might carry over.
  Issue 2 is critical since these shifts dominate the complexity: assuming the algorithm makes a total of $O(\ell k)$ recursive calls to itself, then $O(\ell k)$ shifts need to be computed, each of which costs $O(\ell \deg_x Q) \subset O(\ell n)$.
  If Issue 1 is resolved the algorithm should then have complexity $O(\ell^2 k n)$.
  }
has complexity $O(\ell^2 k n)$.

\section{Row Reduction of $\R$-matrices}
\label{sec:RowReduction}

\subsection{The Mulders--Storjohann Algorithm}\label{ssec:MS}

In this section, we introduce our algorithmic work horse: obtaining row reduced bases of left $\R$-modules $\V \subseteq \R^{m}$.
The core is an $\R$-variant of the Mulders--Storjohann algorithm \cite{mulders2003lattice} that was originally described for $\K[x]$ matrices.
The algorithm and its proof of correctness carries over almost unchanged, while a fine-grained complexity analysis is considerably more involved; we return to this in \cref{ssec:mscomplexity}.

\begin{definition}\label{def:simple_transformation}
  Applying a \emph{simple transformation $i$ on $j$ at position $h$} on a matrix $V$ with $\deg v_{i,h} \le \deg v_{j,h}$ means to replace $\v_j$ by $\v_j - \alpha x^\beta \v_i$,
  where $\beta = \deg v_{j,h} - \deg v_{i,h}$ and $\alpha = \LC{v_{j,h}}/\theta^{\beta}(\LC{v_{i,h}})$.

  By a \emph{simple LP-transformation $i$ on $j$}, where $\LP{\v_i} = \LP{\v_j}$, we will mean a simple transformation $i$ on $j$ at position $\LP{\v_i}$.
\end{definition}

\begin{remark}
Note that a simple transformation $i$ on $j$ at position $h$ cancels the leading term of the polynomial $v_{j,h}$.
Elementary row operations keep the row space and rank of the matrix unchanged, and in particular so does any sequence of simple transformations.
\end{remark}

\noindent We use the following \emph{value function} for $\R$ vectors as a ``size'' of $\R^{m}$ vectors:
\begin{IEEEeqnarray*}{r;cCl}
  \vecVal: & \R^{m} & \to     & \NN \\
           & \v          & \mapsto & \left\{
         \begin{array}{l@{\quad}l}
           0 & \textrm{ if } \v = \vec 0 \\
           m\deg \v + \LP{\v} + 1 & \textrm{otherwise}
         \end{array} \right.
\end{IEEEeqnarray*}

\begin{lemma}\label{lem:val_dec}
  For some $V \in \R^{\cdot \times m}$, consider a simple LP-transformation $i$ on $j$, where $\v_j$ is replaced by $\v'_j$.
  Then $\vecVal(\v_j')<\vecVal(\v_j)$.
\end{lemma}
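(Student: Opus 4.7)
My plan is to track what the simple LP-transformation does component-by-component, then read off the change in $\vecVal$.

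Let $h := \LP{\v_i} = \LP{\v_j}$ and write $\v_j' = \v_j - \alpha x^\beta \v_i$ with $\alpha, \beta$ as in \cref{def:simple_transformation}. First I would verify the \emph{cancellation at position $h$}: by construction of $\alpha$ and $\beta$, the product $\alpha x^\beta v_{i,h}$ has leading monomial $\alpha\,\theta^\beta(\LC{v_{i,h}})\,x^{\beta + \deg v_{i,h}} = \LC{v_{j,h}}\,x^{\deg v_{j,h}}$, so $\deg v_{j,h}' < \deg v_{j,h} = \deg \v_j$. Next, I would show that no other coordinate exceeds $\deg \v_j$: for any $k$ with $v_{i,k} \ne 0$ we have $\deg(\alpha x^\beta v_{i,k}) = \beta + \deg v_{i,k} \le \beta + \deg \v_i = \deg v_{j,h} = \deg \v_j$, and also $\deg v_{j,k} \le \deg \v_j$. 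Hence $\deg \v_j' \le \deg \v_j$.

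Now I would split into two cases according to whether $\deg \v_j' < \deg \v_j$ or $\deg \v_j' = \deg \v_j$. In the strict case, since $\LP{\v_j'} \in \{0,\dots,m-1\}$,
\[
\vecVal(\v_j') \le m(\deg \v_j - 1) + m = m\deg\v_j < m\deg\v_j + \LP{\v_j} + 1 = \vecVal(\v_j),
\]
and the $\v_j' = \vec 0$ sub-case is immediate. In the equal-degree case, I would argue that $\LP{\v_j'} < h$. Indeed, for every $k > h$ the position $h$ being the \emph{rightmost} maximal-degree coordinate of both $\v_i$ and $\v_j$ gives $\deg v_{i,k} < \deg \v_i$ and $\deg v_{j,k} < \deg \v_j$; combined with $\deg(\alpha x^\beta v_{i,k}) \le \beta + \deg v_{i,k} < \deg \v_j$, this yields $\deg v_{j,k}' < \deg \v_j$. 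Together with the cancellation $\deg v_{j,h}' < \deg \v_j$, all coordinates at positions $\ge h$ drop strictly below $\deg \v_j$, forcing $\LP{\v_j'} < h = \LP{\v_j}$ and hence $\vecVal(\v_j') < \vecVal(\v_j)$.

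I expect no real obstacle: the only subtlety is being careful with the skew multiplication in checking that $\alpha x^\beta v_{i,h}$ really cancels the leading term (this is where the $\theta^\beta$ in the definition of $\alpha$ matters), and with exploiting ``rightmost'' in the definition of $\LP$ to conclude that the leading position strictly decreases in the equal-degree case. Neither step uses anything beyond the elementary arithmetic of $\R$, so the proof should be a short, direct verification.
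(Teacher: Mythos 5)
Your proof is correct and is exactly the standard argument the paper invokes by reference (it cites the $\F[x]$ case of \cite[Lemma~8]{nielsen2013generalised} and omits the details): cancellation of the leading term at position $h$, the degree bound on all other coordinates, and the case split on whether $\deg\v_j'$ drops or the leading position moves strictly left. You correctly handle the one genuinely skew ingredient, namely that $\alpha = \LC{v_{j,h}}/\theta^{\beta}(\LC{v_{i,h}})$ is what makes $\alpha x^\beta v_{i,h}$ cancel the leading term of $v_{j,h}$.
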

\begin{proof}
The proof works exactly as in the $\F[x]$ case, cf. \cite[Lemma~8]{nielsen2013generalised}.
\end{proof}

\begin{algorithm}[t]
  \caption{Mulders--Storjohann for $\R$ matrices}
  \label{alg:ms}
  \begin{algorithmic}[1]
    \Require{A matrix $V$ over $\R$, whose rows span a module $\V$.}
    \Ensure{A basis of $\V$ in weak Popov form.}
    \State Until no longer possible, apply a simple LP-transformation on two rows in $V$.
    \State \Return $V$.
  \end{algorithmic}
\end{algorithm}

\begin{theorem}
  \label{thm:ms}
\cref{alg:ms} is correct.
\end{theorem}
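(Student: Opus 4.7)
The plan is to verify the three standard components: termination, correctness of the output form, and preservation of the row space.

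First I would handle row-space preservation, which is essentially free: each simple LP-transformation replaces $\v_j$ by $\v_j - \alpha x^\beta \v_i$, i.e.\ a left-elementary row operation, and this preserves the left $\R$-row space (the Remark after \cref{def:simple_transformation} already records this). Hence at every step of \cref{alg:ms} the rows of $V$ continue to span $\V$.

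Next I would prove termination using the value function $\vecVal$ as a potential. Define $\Psi(V) := \sum_i \vecVal(\v_i) \in \NN$. A simple LP-transformation $i$ on $j$ alters only the row $\v_j$, and by \cref{lem:val_dec} strictly decreases $\vecVal(\v_j)$. Therefore $\Psi(V)$ strictly decreases at every iteration. Since $\Psi$ is a non-negative integer, the loop must halt after finitely many steps.

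Finally, I would show that at termination the matrix is in weak Popov form. By the loop guard, no further simple LP-transformation is possible; by \cref{def:simple_transformation} this precisely means no two non-zero rows share a leading position (otherwise a simple LP-transformation on that pair would be applicable, since the definition only requires equal leading positions and one may choose the row of larger leading-position degree as $\v_j$). By definition this is weak Popov form.

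I do not expect a substantive obstacle: the only subtlety is checking that the ``cancellation'' in a simple LP-transformation really is well-defined over $\R$ despite non-commutativity, i.e.\ that the scalar $\alpha = \LC{v_{j,h}}/\theta^\beta(\LC{v_{i,h}})$ together with left multiplication by $\alpha x^\beta$ does kill the leading term of $v_{j,h}$; this is handled implicitly by \cref{def:simple_transformation} and \cref{lem:val_dec}, so nothing new is needed here. Everything else transfers verbatim from the $\F[x]$ proof of Mulders--Storjohann.
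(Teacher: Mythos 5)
Your proposal is correct and follows essentially the same route as the paper: termination via the strictly decreasing, non-negative potential $\sum_i \vecVal(\v_i)$ using \cref{lem:val_dec}, the weak Popov property at exit from the impossibility of further simple LP-transformations, and row-space preservation from the Remark on elementary row operations. Nothing is missing; your extra remarks on well-definedness of the cancellation over $\R$ are already absorbed into \cref{def:simple_transformation} and \cref{lem:val_dec}, exactly as you note.
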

\begin{proof}
  By \cref{lem:val_dec}, the $\vecVal$-value of one row of $V$ decreases for each simple LP-transformation.
  The sum of the values of the rows must at all times be non-negative so the algorithm must terminate.
  When the algorithm terminates there are no $i \neq j$ such that $\LP{\v_i}= \LP{\v_j}$.
  That is to say, $V$ is in weak Popov form.
\end{proof}

The above proof easily leads to the rough complexity estimate of \cref{alg:ms} of $O(m^2 \deg V \maxdeg V)$, where $m$ is the number of columns in $V$.

Note that in \cref{alg:ms} each iteration might present several possibilities for the simple LP-transformation; the above theorem shows that any choice of LP-transformations leads to the correct result.

To transform $V$ into $\w$-shifted weak Popov form, for some shift $\w \in \NN^m$, we let $V' = \Phi_\w(V)$ and apply \cref{alg:ms} on $V'$ to obtain $W'$ in weak Popov form.
Since \cref{alg:ms} only performs row operations, it is clear that $\Phi_\w$ can be inverted on $W'$ to obtain $W = \Phi_\w^\mo(W')$.
Then $W$ is in $\w$-shifted weak Popov form by definition.

\subsection{The Determinant Degree and Orthogonality Defect}\label{sec:complexity}

The purpose of this section is to introduce the orthogonality defect as a tool for measuring ``how far'' a square, full-rank matrix over $\R$ is from being in weak Popov form.
It relies on the nice properties of the degree of the Dieudonné determinant for matrices over $\R$.
The orthogonality defect for $\K[x]$ matrices was introduced by Lenstra \cite{lenstra85factoring} and used in \cite{nielsen2013generalised} to similar effect as we do here.

Dieudonné introduced a function for matrices over skew fields which shares some of the essential properties of the usual commutative determinant, in particular that it is multiplicative, see \cite{dieudonne1943det} or \cite[\S 20]{draxl1983skew}.
This Dieudonné determinant can be applied to matrices over $\R$ by considering $\R$ inside its left field of fractions.
The definition of this determinant is quite technical, and we will not actually need to invoke it.
Rather, we will use an observation by Taelman \cite{taelman2006dieudonne} that the Dieudonné determinant implies a simple-behaving \emph{determinant degree} function for matrices with very nice properties:
\begin{proposition}
  \label{prop:degdet}
  There is a unique function $\deg\det:\,\MR \to \ZZ_{\geq 0} \cup \{-\infty\}$ s.t.:
  \begin{itemize}
    \setlength{\itemsep}{0.2em}
    \item $\deg\det(A A') = \deg\det(A) + \deg\det(A')$ for all $A, A' \in \MR$.
    \item $\deg\det U = 0$ for all $U \in GL_m(\R)$.
    \item If $A$ is diagonal with diagonal elements $d_0,\ldots,d_{m-1}$, then $\deg \det A = \sum_{i}\deg d_i$
  \end{itemize}
\end{proposition}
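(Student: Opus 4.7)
The plan is to split the statement into existence and uniqueness, and handle both by leveraging the Dieudonné determinant together with Jacobson's non-commutative analogue of the Smith normal form. The key observation is that $\R$ is a (left and right) PID embedded in its skew field of fractions $\Q$, and that the degree function $\deg \colon \R \setminus \{0\} \to \NN$ extends to a valuation on $\Q^*$ via $\deg(b^{-1}a) := \deg a - \deg b$, which is well-defined using the Ore condition for equality of fractions. Since $\deg(pq) = \deg p + \deg q = \deg(qp)$, this valuation factors through the abelianisation $\Qab := \Q^*/[\Q^*,\Q^*]$.

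For existence I would define $\deg\det A := \deg(\mathrm{Ddet}(A))$, where $\mathrm{Ddet}: \mathrm{GL}_m(\Q) \to \Qab$ is the Dieudonné determinant (extended by $\deg\det A := -\infty$ when $A$ is singular over $\Q$). Multiplicativity of $\deg\det$ then follows immediately from multiplicativity of $\mathrm{Ddet}$ and additivity of the valuation. The diagonal axiom is built into $\mathrm{Ddet}$: for $D = \diag(d_0,\ldots,d_{m-1})$ one has $\mathrm{Ddet}(D) = d_0 \cdots d_{m-1} \in \Qab$, so $\deg\det D = \sum_i \deg d_i$. For the $\mathrm{GL}_m(\R)$ axiom, observe that any $A \in \MR$ can be row-reduced to an upper triangular matrix $H \in \MR$ via left multiplication by elements of $\mathrm{GL}_m(\R)$ (Hermite form), so $\mathrm{Ddet}(A)$ and $\mathrm{Ddet}(H)$ differ by a unit in $\R^*$, which has degree $0$; hence $\deg\det A \geq 0$ for every $A \in \MR$. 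If $U \in \mathrm{GL}_m(\R)$, then both $\deg\det U$ and $\deg\det U^{-1}$ are non-negative, yet their sum is $\deg\det I = 0$, forcing both to be $0$.

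For uniqueness, I would invoke Jacobson's normal form for matrices over a non-commutative PID: every $A \in \MR$ decomposes as $A = P D Q$ with $P, Q \in \mathrm{GL}_m(\R)$ and $D$ diagonal (with some diagonal entries possibly zero when $A$ is singular). Any function $f$ satisfying the three axioms must then obey
\[
  f(A) \;=\; f(P) + f(D) + f(Q) \;=\; 0 + \textstyle\sum_i \deg d_i + 0,
\]
which pins $f(A)$ down uniquely in terms of the Jacobson diagonal (with the convention that a zero diagonal entry contributes $-\infty$, matching the $\ZZ_{\geq 0} \cup \{-\infty\}$ codomain). Since the value is forced, $f = \deg\det$.

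The main obstacle I anticipate is the careful bookkeeping around $\Qab$ and the well-definedness of $\deg$ on it, together with verifying that the Hermite-form argument really yields non-negativity of $\deg \mathrm{Ddet}$ on $\MR$; these details are exactly what \cite{taelman2006dieudonne} establishes, and a clean write-up would in practice cite that reference rather than redoing the construction. The use of Jacobson form is standard but relies on $\R$ being a two-sided PID, which was already noted in \cref{ssec:skewmatrices}.
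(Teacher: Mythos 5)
Your proposal is correct and follows essentially the route the paper itself takes: the paper offers no written proof of \cref{prop:degdet} but defers to Taelman \cite{taelman2006dieudonne}, whose construction is exactly your existence argument (the degree valuation on $\Q$ factoring through $\Qab$, composed with the Dieudonn\'e determinant, with non-negativity on $\MR$ via triangularisation by elementary row operations). Your uniqueness argument via the Jacobson diagonal reduction over the two-sided PID $\R$ is the standard way to pin the function down and fills in the part of the statement the paper leaves entirely to the citation.
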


\begin{corollary}
  \label{cor:degdet}
  For any $A,A' \in \MR$ then:
\begin{itemize}
  \item If $A'$ is obtained from $A$ by elementary row operations, then $\deg\det A' = \deg\det A$.
  \item If $B$ equals $A \in \MR$ with one row or column scaled by some $f \in \R^*$, then $\deg\det B = \deg f + \deg \det A$.
  \item If $A$ is triangular with diagonal elements $d_0,\ldots,d_{m-1}$, then $\deg\det A = \sum_i\deg d_i$.
  \item $\deg\det(\Phi_\w(A)) = \deg\det(A) + \sum_i w_i$ for any shift $\vec w$.
\end{itemize}
\end{corollary}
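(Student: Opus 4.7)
The plan is to derive all four items directly from the three axioms of \cref{prop:degdet}, reducing each to a short matrix factorisation followed by multiplicativity.

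Items (1), (2) and (4) are essentially one-liners. For (1), every elementary row operation equals left multiplication by a matrix in $\mathrm{GL}_m(\R)$, so the second axiom of \cref{prop:degdet} together with multiplicativity gives $\deg\det A' = \deg\det A$. For (2), scaling row $i$ (resp.\ column $i$) by $f$ is left (resp.\ right) multiplication by the diagonal matrix $\diag(1,\dots,f,\dots,1)$, whose determinant-degree is $\deg f$ by the diagonal rule, and multiplicativity finishes the argument. For (4), $\Phi_\w$ applied to a matrix is exactly right multiplication by $\diag(x^{w_0},\dots,x^{w_\ell})$, which has determinant-degree $\sum_i w_i$, so again multiplicativity closes the case.

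The only bullet requiring real work is (3), which I would prove by induction on $m$ by peeling off the bottom-right diagonal entry. Writing an upper triangular $A$ in block form $A = \begin{pmatrix} A' & \vec b \\ 0 & d_{m-1} \end{pmatrix}$, a direct computation verifies the factorisation
\[
A \;=\; \begin{pmatrix} I & 0 \\ 0 & d_{m-1} \end{pmatrix} \cdot \begin{pmatrix} I & \vec b \\ 0 & 1 \end{pmatrix} \cdot \begin{pmatrix} A' & 0 \\ 0 & 1 \end{pmatrix},
\]
in which the first factor is diagonal with determinant-degree $\deg d_{m-1}$, and the middle factor lies in $\mathrm{GL}_m(\R)$ (its inverse has $-\vec b$ in place of $\vec b$). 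Multiplicativity then reduces the claim to showing $\deg\det \begin{pmatrix} A' & 0 \\ 0 & 1 \end{pmatrix} = \deg\det A'$. For this I would invoke the uniqueness clause of \cref{prop:degdet}: the function $\phi: X \mapsto \deg\det \begin{pmatrix} X & 0 \\ 0 & 1 \end{pmatrix}$ on $\R^{(m-1) \times (m-1)}$ is easily checked to be multiplicative, to vanish on $\mathrm{GL}_{m-1}(\R)$, and to obey the diagonal rule, so it must agree with $\deg\det$ on $(m-1)$-sized matrices. Induction then closes the argument; the lower triangular case is entirely symmetric, peeling off $d_0$ from the top-left block instead.

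The hard part is exactly the step above: a naive attempt to diagonalise $A$ using only elementary operations in $\mathrm{GL}_m(\R)$ fails because the entries of $\vec b$ need not be right multiples of $d_{m-1}$ (nor left multiples of the $d_i$), so one cannot cancel $\vec b$ against the diagonal directly. Factoring $d_{m-1}$ out first produces a $1$ in the bottom-right corner against which $\vec b$ can always be cleared by a unipotent transformation, and invoking uniqueness avoids any need to unfold the explicit construction of the \Dieu determinant.
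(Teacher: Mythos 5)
Your proof is correct. The paper states this corollary without proof, so there is no in-paper argument to compare against; your derivation --- reducing items (1), (2) and (4) to multiplicativity together with the $\mathrm{GL}_m(\R)$ and diagonal axioms, and handling the triangular case via the factorisation of $A$ into a diagonal factor carrying $d_{m-1}$, a unipotent factor in $\mathrm{GL}_m(\R)$, and the block matrix $\diag(A',1)$, followed by an appeal to the uniqueness clause of \cref{prop:degdet} to identify $X \mapsto \deg\det \diag(X,1)$ with the determinant degree in dimension $m-1$ --- is complete and fills in exactly what the authors leave implicit. You are also right that the triangular bullet is the only one requiring care: over $\R$ the off-diagonal block cannot in general be cleared against $d_{m-1}$ by row operations alone, and factoring out the diagonal entry first before clearing against the resulting $1$ is the correct fix.
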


\begin{example}\label{ex:detdeg}
Consider input matrix $\Phi_\w(M)$ to \cref{alg:shiftregisteralg} for the case\footnote{This is a realistic shift register problem arising in decoding of an Interleaved Gabidulin code with $n=s=100$, $k_1 = 58$, $k_2 = 31$.} $\ell=2$, $\w = (\gamma_0, \gamma_1, \gamma_2) = (100, 42,69)$, $\deg s_1 = 99$, $\deg s_2 = 95$ and $\deg g_1 = \deg g_2 = 100$.
Then
\begin{align*}
  \arraycolsep=5pt
  \Phi_\w(M) = 
  \begin{pmatrix}
  x^{\gamma_0} & s_1 x^{\gamma_1} & s_2 x^{\gamma_2} \\
               & g_1 x^{\gamma_1} &                  \\
               &                  & g_2 x^{\gamma_2}
  \end{pmatrix}
  = 
  \begin{pmatrix}
  1 & s_1 & s_2 \\
    & g_1 &     \\
    &     & g_2 
  \end{pmatrix}
  \begin{pmatrix}
       x^{\gamma_0} \\
    &  x^{\gamma_1} \\
    && x^{\gamma_2}
  \end{pmatrix}
    \ .
\end{align*}
And so by \cref{prop:degdet}, $\deg\det \Phi_\w(M) = \deg g_1 + \deg g_2 + \sum_i \gamma_i = 411$. \hfill \IEEEQEDhere
\end{example}

This description of $\deg\det(\cdot)$ is not operational in the sense that it is not clear how to compute $\deg\det V$ for general $V \in \MR$.
The following definition and \cref{prop:ODzero} implies that \cref{alg:ms} can be used to compute $\deg\det V$; conversely, we show in \cref{ssec:mscomplexity} how to bound the complexity of \cref{alg:ms} based on $\deg\det V$.

\begin{definition}
  The orthogonality defect of $V \in \MR$ is $\OD{V} := \deg V - \deg \det V$.
\end{definition}

\noindent
The following observations are easy for $\K[x]$ matrices, but require more work over $\R$:

\begin{proposition}\label{prop:ODzero}
Let $V \in \R^{m \times m}$ of full rank and in weak Popov form. Then $\OD{V} = 0$.
\end{proposition}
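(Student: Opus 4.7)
The plan is to reduce $V$ by unimodular row operations to an upper triangular matrix whose diagonal degrees are $d_0,\dots,d_{m-1}$ (where $d_i := \deg \v_i$), and then invoke \cref{cor:degdet}. Since $V$ is in weak Popov form, the map $i \mapsto \LP{\v_i}$ is a permutation of $\{0,\dots,m-1\}$; left-multiplying by the corresponding permutation matrix $P \in \mathrm{GL}_m(\R)$ preserves both $\deg V$ and, by \cref{cor:degdet}, $\deg\det V$, so I may assume $\LP{\v_i} = i$ for all $i$. Under this normalisation $\deg v_{i,i} = d_i$, $\deg v_{i,j} < d_i$ for $j > i$, and $\deg V = \sum_i d_i$.

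Next, for each pair $i < j$ I would iteratively apply a simple transformation of row $i$ on row $j$ at position $i$, each strictly reducing $\deg v_{j,i}$, as long as $\deg v_{j,i} \ge d_i$. The crucial invariant to check is that none of these disturbs the diagonal degree of row $j$: replacing $\v_j$ by $\v_j - \alpha x^\beta \v_i$ with $\beta = \deg v_{j,i} - d_i \le d_j - d_i$ modifies $v_{j,j}$ by a term of degree at most $\beta + \deg v_{i,j} < (d_j - d_i) + d_i = d_j$, using $\deg v_{i,j} < d_i$. Hence the leading coefficient of $v_{j,j}$ survives and $\deg \v_j = d_j$ is preserved. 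After finitely many iterations $\deg v_{j,i} < d_i$ for every $j > i$. From this state, further (non-simple) elementary row operations—available because $\R$ is left-Euclidean—are used to fully zero out the subdiagonal, producing an upper triangular $T = UV$ with $U \in \mathrm{GL}_m(\R)$ and $\deg t_{i,i} = d_i$.

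The conclusion is then immediate: by \cref{cor:degdet}, $\deg\det V = \deg\det T = \sum_i \deg t_{i,i} = \sum_i d_i = \deg V$, so $\OD{V} = 0$. The main obstacle lies in the final triangularisation step: unlike the $\K[x]$ case (where a Leibniz expansion settles the claim at once), simple LP-transformations over $\R$ only drive subdiagonal entries below degree $d_i$ rather than to zero, and the additional row operations required to cancel them must be chosen carefully so as not to perturb the diagonal leading coefficients. A cleaner alternative route, which sidesteps explicit triangularisation, is to factor $V = D \cdot (D^{-1}V)$ with $D = \diag(x^{d_0},\ldots,x^{d_{m-1}})$ and observe that, over the skew formal power series ring $\K[[x^{-1}]]$, the matrix $D^{-1}V$ reduces modulo $x^{-1}$ to a lower-triangular $\K$-matrix with nonzero diagonal; hence $D^{-1}V$ is a unit whose determinant has degree $0$, giving $\deg\det V = \deg\det D = \sum_i d_i$ directly.
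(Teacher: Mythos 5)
Your first route breaks down at the triangularisation step, and not merely as a technical ``obstacle'': the intermediate object you aim for need not exist. There is in general no unimodular $U \in \mathrm{GL}_m(\R)$ such that $T = UV$ is upper triangular with $\deg t_{i,i} = \deg \v_i$. Already over $\K[x]$ (i.e.\ $\theta = \mathrm{id}$), take $V = \left(\begin{smallmatrix} x & 1 \\ 1 & x\end{smallmatrix}\right)$, which is full rank and in weak Popov form with $d_0 = d_1 = 1$. For any upper triangular $T = UV$ with $U$ unimodular, the entry $t_{0,0}$ must generate the projection of the row module onto coordinate $0$, which here is all of $\R$; hence $\deg t_{0,0} = 0$ and, since $\deg\det T = \deg\det V = 2$, we get $\deg t_{1,1} = 2$ --- the diagonal degrees cannot be $(1,1)$. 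Your step~2 (driving $\deg v_{j,i}$ below $d_i$ by simple transformations while preserving the diagonal degrees) is fine, but the subsequent full cancellation of the subdiagonal cannot be done unimodularly while keeping the diagonal degrees. The paper resolves exactly this point by a different device: to clear $v_{j,i}$ it uses the left union (least common left multiple), i.e.\ $a, b \in \R$ with $a\, v_{i,i} = b\, v_{j,i}$, and replaces $\v_j$ by $b\v_j - a\v_i$. This transformation is \emph{not} unimodular --- it scales row $j$ by $b$ --- so the proof must track the accumulated degrees $\deg b^{(i)}_j$ on both sides: they inflate the diagonal degrees of the final triangular matrix by exactly the amount by which they inflate $\deg\det$, and the two contributions cancel, leaving $\deg\det V = \sum_i \deg v_{i,i}$. (One must also verify that each such operation preserves ordered weak Popov form, which the paper does via the value function $\vecVal$.)

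Your proposed alternative via $V = D\cdot(D^{-1}V)$ is the classical $\K[x]$ argument (nonsingularity of the leading row coefficient matrix) and is morally the right statement, but as written it is an assertion rather than a proof in this paper's framework: \cref{prop:degdet} defines $\deg\det$ only for matrices over $\R$ via its left quotient field. You would need to construct the skew Laurent series field in $x^{-1}$, extend $\deg\det$ to it compatibly, and prove that a matrix over the skew power series subring that is invertible modulo $x^{-1}$ is a unit whose determinant degree is zero. None of that machinery is set up in the paper, so this route, while salvageable, requires substantial additional development rather than sidestepping the difficulty.
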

\begin{proof}
  Due to \cref{cor:degdet}, we can assume the columns and rows of $V$ are ordered such that $\LP{\vec v_i} = i$ and $\deg v_{i,i} \leq \deg v_{j,j}$ for $i < j$.
  We will call this property ``ordered weak Popov form'' in this proof.
  Note that it implies $\vecVal(\vec v_i) < \vecVal(\vec v_j)$ for $i < j$.
  We will inductively obtain a series of matrices $V\I 0 = V, V\I 1, V\I 2, \ldots, V\I m$ each in ordered weak Popov form, and such that the first $i$ columns of $V\I i$ are zero below the diagonal.
  Then $V\I m$ is upper triangular and we can obtain two expressions for its $\deg\det$.

  So assume that $V\I i$ is in ordered weak Popov form and its first $i$ columns are zero below the diagonal.
  Recall that the (left) \emph{union} of two skew polynomials $f, g \in \R$ is the unique lowest-degree $p \in \R$ such that $p = af = bg$ for some $a,b \in \R$; it is a consequence of the Euclidean algorithm that the union always exists, see e.g.~\cite{ore1933theory}.
  For each $j > i$ consider now the coefficients in the union of $v\I i_{i,i}$ and $v\I i_{j,i}$, i.e.~$a\I i_j,b\I i_j \in \R$ such that $a\I i_j v\I i_{i,i} = b\I i_j v\I i_{j,i}$.
  Let
  \[
    V\I {i+1} =
    \left(\begin{array}{@{}c|cccc@{}}
       \ I_{i-1}\  &   \\\hline
        & 1 \\
        & -a\I i_{i+1} & b\I i_{i+1} \\
        & \vdots  & & \ddots \\
        & -a\I i_{m-1} & & & b\I i_{m-1}
    \end{array}\right)
    V\I i \ ,
  \]
  where $I_{i-1}$ is the $(i-1) \times (i-1)$ identity matrix.
  The $i+1$ first columns of $V\I{i+1}$ are then zero below the diagonal.
  Also $\LP{a\I i_j\vec v\I i_i} < \LP{b\I i_j\vec v\I i_{j}} = \LP{\vec v\I i_j}$ and $\deg(a\I i_j\vec v\I i_i) \leq \deg(b\I i_j\vec v\I i_{j})$ for $j > i$, which means $\vecVal(a\I i_j\vec v\I i_i) < \vecVal(b\I i_j\vec v\I i_{j})$ and therefore $\vecVal(\vec v\I{i+1}_j) = \vecVal(b\I i_j\vec v\I i_{j})$.
  This implies that $V\I{i+1}$ is in ordered weak Popov form and that $\deg v\I{i+1}_{j,j} = \deg b\I i_j + \deg v\I i_{j,j}$ for $j > i$, which inductively expands to
  \[
    \deg v\I{i+1}_{j,j} = \deg v_{j,j} + \sum_{h=0}^i \deg b\I h_j \ .
  \]
  Inductively, we therefore arrive at an upper triangular matrix $V\I m$ in ordered weak Popov form, and whose diagonal elements satisfy $\deg v\I m_{j,j} = \deg v_{j,j} + \sum_{i=0}^{j-1} \deg b\I i_j$.
  Thus $\deg\det V\I m$ is the sum of all these degrees by \cref{cor:degdet}.
  On the other hand $V\I m$ is obtained by multiplying triangular matrices on $V\I 0 = V$, so by \cref{prop:degdet} we get another expression for $\deg\det V\I m$ as:
  \[
    \deg\det V\I m = \deg\det V + \sum_{i=0}^{m-1} \sum_{j=i+1}^{m-1} \deg b\I i_j
  \]
  Combining the expressions, we get $\deg\det V = \sum_{i=0}^{m-1} \deg v_{i,i} = \deg V$.
\end{proof}

\begin{corollary}\label{cor:degdet_is_small}
  Let $V \in \MR$ and full-rank, then $0 \leq \deg \det V \leq \deg V$.
\end{corollary}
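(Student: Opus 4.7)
The plan is to leverage Algorithm~\ref{alg:ms} together with Proposition~\ref{prop:ODzero} to reduce to the already-handled case of a matrix in weak Popov form. First, the lower bound $\deg\det V \geq 0$ is essentially by definition: Proposition~\ref{prop:degdet} states that $\deg\det$ takes values in $\ZZ_{\geq 0} \cup \{-\infty\}$, and since $V$ is full-rank its Dieudonn\'e determinant is non-zero, so the degree cannot be $-\infty$ (this is an easy consequence of the multiplicativity property, since $V$ would otherwise be a zero-divisor in the field of fractions).

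For the upper bound, I would run \cref{alg:ms} on $V$ to produce a matrix $W$ in weak Popov form. Since elementary row operations preserve the row space and in particular the rank, $W$ is still an $m \times m$ full-rank matrix; in particular all its rows are non-zero and their leading positions form a permutation of $\{0,\dots,m-1\}$. By \cref{cor:degdet}, the determinant degree is unchanged under elementary row operations, so $\deg\det V = \deg\det W$, and by \cref{prop:ODzero} we have $\deg\det W = \deg W$.

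It remains to show $\deg W \leq \deg V$, which reduces to verifying that each simple LP-transformation applied during \cref{alg:ms} does not increase the degree of any row. Concretely, if $\v_j$ is replaced by $\v_j' = \v_j - \alpha x^\beta \v_i$ with $h = \LP{\v_i} = \LP{\v_j}$ and $\beta = \deg v_{j,h} - \deg v_{i,h}$, then for column $h$ the leading term cancels and the degree strictly decreases, while for any other column $h'$ we have $\deg v_{i,h'} \le \deg v_{i,h}$ (since $h = \LP{\v_i}$), hence $\deg(\alpha x^\beta v_{i,h'}) \le \beta + \deg v_{i,h} = \deg v_{j,h} = \deg \v_j$, and also $\deg v_{j,h'} \le \deg \v_j$; so $\deg v_{j,h'}' \le \deg \v_j$. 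Thus $\deg \v_j' \le \deg \v_j$, and summing over rows the total degree does not grow throughout the algorithm, giving $\deg W \le \deg V$. Chaining the equalities and inequality yields $\deg\det V = \deg W \leq \deg V$.

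The only mildly delicate step is the per-row degree monotonicity above, but this is a direct inspection of the formula for simple LP-transformations (it is in fact the same argument hidden inside \cref{lem:val_dec}); everything else is packaging of results already established.
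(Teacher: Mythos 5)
Your proof is correct and follows essentially the same route as the paper: run \cref{alg:ms} to reach a weak Popov form $W$, use invariance of $\deg\det$ under row operations and \cref{prop:ODzero} to get $\deg\det V = \deg W$, and observe that the algorithm never increases row degrees so $\deg W \le \deg V$. The paper simply asserts the last step ``by the nature of \cref{alg:ms}'' where you spell out the per-row degree monotonicity explicitly, and your direct justification of the lower bound from the codomain of $\deg\det$ is a harmless minor variation.
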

\begin{proof}
  Applying \cref{alg:ms} on $V$ would use row operations to obtain a matrix $V' \in \MR$ in weak Popov form. 
  Then $\deg \det V = \deg\det V'$ by \cref{prop:degdet}.
  By \cref{prop:ODzero} then $\deg \det V' = \deg V' \geq 0$, and by the nature of \cref{alg:ms}, then $\deg V' \leq \deg V$.
\end{proof}

\subsection{Complexity of Mulders--Storjohann}
\label{ssec:mscomplexity}

We can now bound the complexity of \cref{alg:ms} using arguments completely analogous to the $\K[x]$ case in \cite{nielsen2013generalised}.
These are in turn, the original arguments of \cite{mulders2003lattice} but finer grained by using the orthogonality defect.
We bring the full proof here since the main steps are referred to in \cref{sec:demanddriven}.

\begin{theorem}\label{thm:ms_complexity_general}
\cref{alg:ms} with a full-rank input matrix $V \in \R^{m \times m}$ performs at most $m \big(\OD{V} + m \big)$ simple LP-transformations, and it has complexity $O(m^2 \OD{V} \maxdeg(V))$ over $\K$.
\end{theorem}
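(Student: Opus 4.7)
The plan is to reuse the standard Mulders--Storjohann analysis from \cite{mulders2003lattice,nielsen2013generalised}, now leveraging the orthogonality defect machinery developed in \cref{sec:complexity} which legitimately carries over to $\R$-matrices via the Dieudonné determinant. The key is to use the sum $S(V) := \sum_i \vecVal(\v_i)$ as a potential function. By \cref{lem:val_dec}, every simple LP-transformation strictly decreases $\vecVal$ of the affected row, so $S$ drops by at least $1$ per iteration. Hence the number of simple LP-transformations is bounded by $S_{\mathrm{init}} - S_{\mathrm{final}}$.

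First I would bound $S_{\mathrm{init}}$: since $\vecVal(\v_i) = m \deg \v_i + \LP{\v_i} + 1 \leq m \deg \v_i + m$, summing over the $m$ rows gives $S_{\mathrm{init}} \leq m \deg V + m^2$. Next I would lower-bound $S_{\mathrm{final}}$: when the algorithm terminates the output $V'$ is in weak Popov form and has full rank, so by \cref{prop:ODzero} we have $\deg V' = \deg\det V'$, which by \cref{cor:degdet} equals $\deg \det V$ since only elementary row operations were performed. Since the non-zero rows of $V'$ have pairwise distinct leading positions in $\{0,\ldots,m-1\}$, we obtain $S_{\mathrm{final}} \geq m \deg\det V + m$. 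Subtracting yields the bound
\begin{equation*}
  S_{\mathrm{init}} - S_{\mathrm{final}} \leq m(\deg V - \deg\det V) + m^2 - m = m\,\OD{V} + m(m-1) \leq m(\OD{V}+m),
\end{equation*}
which gives the claimed transformation count.

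For the arithmetic cost, I would argue that every simple LP-transformation $i$ on $j$ consists of computing $\alpha x^\beta \v_i$ and subtracting it from $\v_j$. Since $x^\beta a = \theta^\beta(a)x^\beta$ in $\R$, this costs $O(\maxdeg V)$ operations per entry (counting $\theta^i$-applications among our basic operations as specified in \cref{ssec:skewpolys}), and there are $m$ entries, so one transformation costs $O(m \maxdeg V)$. Crucially, a simple LP-transformation never increases the degree of the affected row, so $\maxdeg V$ is monotonically non-increasing during the algorithm and the per-step estimate remains valid throughout. Multiplying by the transformation count gives the overall bound $O(m^2 \OD{V} \maxdeg V)$ in the regime where $\OD{V}$ dominates the lower-order $m^2 \maxdeg V$ contribution coming from the $+m$ term.

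The main obstacle I anticipate is verifying that the standard $\K[x]$ potential argument transfers cleanly to the non-commutative setting; essentially all the work there has already been done by establishing the determinant-degree framework in \cref{prop:degdet,prop:ODzero,cor:degdet}, so the skew case adds no genuinely new difficulty beyond bookkeeping the Frobenius applications in the cost of $\alpha x^\beta \v_i$. A small technical point is that one must check that \cref{prop:ODzero} applies at termination: the algorithm may in principle zero out rows, but then the remaining non-zero rows still form a basis with distinct leading positions, and the determinant-degree accounting via \cref{cor:degdet} is unaffected by ordering or by elementary row operations, so the lower bound on $S_{\mathrm{final}}$ still holds.
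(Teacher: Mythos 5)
Your proposal is correct and follows essentially the same route as the paper's proof: the potential function $\sum_i \vecVal(\v_i)$ combined with \cref{lem:val_dec} to count transformations, \cref{prop:ODzero} and \cref{cor:degdet} to identify $\deg U$ with $\deg\det V$ for the final matrix, and the $O(m\maxdeg V)$ per-transformation cost. (Minor remark: since the input is square and full rank, no rows can become zero, so your caveat about zeroed rows is moot; and the paper glosses over the same lower-order $m^3\maxdeg V$ term that you explicitly flag.)
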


\begin{proof}
By \cref{lem:val_dec}, every simple LP-transformation reduces the $\vecVal$-value of one row with at least 1.
So the number of possible simple LP-transformations is upper bounded by the difference of values of the input matrix $V$ and the output matrix $U$, the matrices values being the sum of their rows'.
More precisely, the number of iterations is upper bounded by:
\begin{eqnarray*}
&&   {\textstyle \sum_{i=0}^{m-1}} [m\deg \v_i+\LP{\v_i}-\big( m\deg \u_i+\LP{\u_i} \big)] \\
&\leq& m^2 + m {\textstyle \sum_{i=0}^{m-1}} [\deg \v_i-\deg \u_i]  \\
&=& m [\deg V-\deg U+m] = m (\OD{V}+m) , 
\end{eqnarray*}
where the last equality follows from $\deg U = \deg \det U$ due to \cref{prop:ODzero} and $\deg \det U = \deg \det V$.

One simple transformation consists of calculating $\v_j - \alpha x^\beta \v_i$, so for every coefficient in $\v_i$, we must apply $\theta^\beta$, multiply by $\alpha$ and then add it to a coefficient in $\v_j$, each being in $O(1)$.
Since $\deg \v_j \leq \maxdeg(V)$ this costs $O(m\maxdeg(V))$ operations in $\K$.
\end{proof}

Since $\OD{V} \leq \deg V$, the above complexity bound is always at least as good as the straightforward bound we mentioned at the end of \cref{ssec:MS}.

\begin{example}[Mulders--Storjohann algorithm on an MgLSSR]
  \label{ex:MS_shift_register_input}
  Consider an instance of \cref{prob:skewpade}.
  The complexity of \cref{alg:shiftregisteralg} is determined by a row reduction of
\begin{align}
\arraycolsep=4pt
\Phi_\w(M) = \left (
        \begin{array}{ccccc}
        x^{\gamma_0} & s_1 x^{\gamma_1} & s_2 x^{\gamma_2} & \dots  & s_\ell x^{\gamma_\ell} \\
                     & g_1 x^{\gamma_1} &                                                    \\
                     &                  & g_2 x^{\gamma_2} &                                 \\
                     &                  &                  & \ddots &                        \\
                     &                  &                  &        & g_\ell x^{\gamma_\ell} \\
        \end{array}\right).
  \label{eqn:phiM}
\end{align}
Let $\param := \max_i \{\gamma_i +\deg g_i\}$.
We can assume that $\gamma_0 < \max_{i \geq 1} \{ \gamma_i + \deg s_i \} \leq \param$ since otherwise $M$ is already in $\w$-shifted weak Popov form.
To apply \cref{thm:ms_complexity_general}, we calculate the orthogonality defect of $\Phi_\w(M)$.
Since it is upper triangular, the degree of its determinant is
\begin{equation*}
  \textstyle
\deg \det \Phi_\w(M) =  \sum_{i=1}^\ell \deg g_i + \sum_{i=0}^\ell \gamma_i \ .
\end{equation*}
The degrees of the rows of $\Phi(M)$ satisfy
\begin{align*}
  \deg \Phi_\w(\m_0) &= \max_i \{\gamma_i + \deg s_i\} \leq \param \ , \\
  \deg \Phi_\w(\m_i) &= \gamma_i + \deg g_i & \text{for } i \geq 1.
\end{align*}
Thus, $\OD{\Phi_\w(M)} \leq \param - \gamma_0$.
With $\maxdeg(\Phi_\w(M)) \leq \param$, \cref{thm:ms_complexity_general} implies a complexity of $O(\ell^2 \param^2)$, assuming $\ell \in O(\param)$.
Note that the straightforward bound on \cref{alg:ms} yields $O(\ell^3 \param^2)$.
\hfill \IEEEQEDhere
\end{example}

\begin{example}[Mulders--Storjohann for the Interpolation Step in decoding MV codes]\\%
\label{ex:MS_MV_complexity}%
\cref{line:mv_interpol} of \cref{alg:mv_interpol} is a row reduction of $\Phi_\w(M)$, as defined in \eqref{eq:M_MV} on page \pageref{eq:M_MV}, whose degrees of the nonzero entries are component-wise upper bounded by:
\begin{align*}
\begin{pmatrix}
n &  & & & &\\
n & (k-1) & & &\\
n & & 2(k-1) & &\\
\vdots & & & \ddots & \\
n & & & & \ell(k-1)
\end{pmatrix}
\end{align*}
Thus $\OD{\Phi_\w(M)} \leq \deg \Phi_\w(M) - n - \binom {\ell+1} 2(k-1) \leq \ell n$.
Using \cref{thm:ms_complexity_general}, the complexity in operations over $\Fqs$ becomes $O(\ell^3 n^2)$.
\end{example}

\section{Faster Row Reduction on Matrices having Special Forms}
\label{sec:FasterRowRed}

In this section, we will investigate improved row reduction algorithms for matrices of special forms.
The main goals are to improve the running time of row reducing the matrices appearing in the decoding settings of \cref{sec:gab_codes} and \cref{sec:mv_codes}, but the results here apply more broadly.

\subsection{Shift Register Problems: The Demand-Driven Algorithm}\label{sec:demanddriven}
\label{ssec:DD}

Our first focus is to improve the MgLSSR case of \Fref[vario]{alg:shiftregisteralg}, where we are to row reduce $\Phi_\w(M)$, given by \eqref{eqn:phiM}: \cref{alg:dd} is a refinement of \cref{alg:ms} which is asymptotically faster when all $g_i$ are of the form $x^{d_i} + a_i$ for $a_i \in \K$.
Though the refinement is completely analogous to that of \cite{nielsen2013generalised} for the $\K[x]$ case, no complete proof has appeared in unabridged, peer-reviewed form before, so we give full proofs of the $\R$ case here.
We begin with a technical lemma:

\def\p#1{\tilde #1}
\def\prev{\code{previous}}
\def\continue{\code{continue}}
\def\ass{\leftarrow}

\begin{lemma}
  \label{lem:msVar}
  Consider an instance of \cref{prob:skewpade} and \cref{alg:ms} with input $\Phi_\w(M)$ of \eqref{eqn:phiM}.
  Let $\p g_j = g_j x^{\gamma_j}$.
  Consider a variant of \cref{alg:ms} where, after a simple LP-transformation $i$ on $j$, which replaces $\vec v_j$ with $\vec v'_j$, we instead replace it with $\vec v''_j = (v'_{j,0}, v'_{j,1} \modop \p g_1, \ldots, v'_{j,\ell} \modop \p g_\ell)$.
  This does not change the correctness of the algorithm or the upper bound on the number of simple LP-transformations performed.
\end{lemma}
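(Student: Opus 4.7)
The plan is to verify two things separately: (i) the variant is still correct, i.e.\ its output is a weak Popov form basis of the same row space as $\Phi_\w(M)$; and (ii) the worst-case bound $m(\OD{\Phi_\w(M)}+m)$ on the number of simple LP-transformations established in \cref{thm:ms_complexity_general} still applies. Both parts rest on a single observation: a mod-reduction step corresponds to subtracting a specific element of the row space from $\vec v_j$, while only shrinking the degrees of its entries.

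For (i), I would first rewrite the mod-reduction as $\vec v''_j = \vec v'_j - \sum_{t=1}^\ell q_t \vec m_t$, where $q_t$ is the right quotient of $v'_{j,t}$ by $\p g_t$ and $\vec m_t$ is the $t$-th original row of $\Phi_\w(M)$, which has $\p g_t$ at position $t$ and zeros elsewhere. This identity holds because the reduction of $v'_{j,t}$ modulo $\p g_t$ alters only position $t$ of $\vec v_j$, and does so by exactly $-q_t \p g_t$. Since each $\vec m_t$ lies in the initial row space, so does $\sum_t q_t \vec m_t$, and hence the new matrix continues to generate a submodule of the original row space. One then checks inductively that the full row space is preserved: at every iteration the matrix has $m$ rows spanning a submodule which, by the same argument applied in reverse, contains every $\vec m_t$ and therefore all of the original row space. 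When the algorithm halts, no LP-transformation applies, so by definition the matrix is in weak Popov form, and thus a weak Popov form basis of the input row space.

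For (ii), I would show that each mod-reduction is non-increasing in the value $\vecVal(\vec v_j)$, so that the value-sum argument in the proof of \cref{thm:ms_complexity_general} carries over without change. Indeed, $\deg v''_{j,t} \leq \deg v'_{j,t}$ for every $t$, hence $\deg \vec v''_j \leq \deg \vec v'_j$; in the case of strict inequality, $\vecVal$ strictly drops, while in the case of equality the set of positions achieving the maximum degree can only shrink under reduction, so $\LP(\vec v''_j) \leq \LP(\vec v'_j)$ and $\vecVal$ again cannot increase. Combined with the strict decrease of at least $1$ per LP-transformation from \cref{lem:val_dec}, every iteration of the variant still drops $\sum_i \vecVal(\vec v_i)$ by at least $1$. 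Using (i), the terminal matrix $U$ is still a weak Popov form basis of the input row space, so $\deg U = \deg\det U = \deg\det \Phi_\w(M)$ by \cref{prop:ODzero} and \cref{prop:degdet}, and the remaining calculation in the proof of \cref{thm:ms_complexity_general} then yields the same bound $m(\OD{\Phi_\w(M)}+m)$.

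The main obstacle is the inductive rank/basis-preservation in (i): since $\vec m_t$ need not be a current row anymore, one must justify carefully that subtracting $\sum_t q_t \vec m_t$ does not cause rank to drop. Once this technical point is established, (ii) is a straightforward re-examination of the value-sum argument and the terminal-matrix analysis already contained in the proof of \cref{thm:ms_complexity_general}.
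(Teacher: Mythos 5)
Your part (ii) is essentially the paper's argument and is fine: each modulo reduction satisfies $\vecVal(\vec v''_j) \le \vecVal(\vec v'_j)$, so the value-sum bound from the proof of \cref{thm:ms_complexity_general} is unaffected. The problem is part (i), and it is precisely the point you flag at the end as ``the main obstacle'' without actually resolving it. Writing $\vec v''_j = \vec v'_j - \sum_t q_t \vec g_t$, where $\vec g_t = (0,\ldots,0,g_t x^{\gamma_t},0,\ldots,0)$ is the $t$-th original row, only shows that the new row space is \emph{contained} in the old one; ``the same argument applied in reverse'' does not give the converse. Subtracting an element $\vec u$ of the current row space from a row $\vec v_j$ is an invertible row operation only when $\vec u$ can be expressed without using $\vec v_j$ itself: if $\vec u = \sum_i a_i \vec v_i$ with $a_j \neq 0$, the new rows generate only $(1-a_j)\vec v_j$ modulo the other rows, and since $\R$ has plenty of non-unit non-zero elements the row space (and hence the rank) can genuinely shrink. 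Because $\vec g_t$ need not remain a row of the current matrix after earlier LP-transformations, this is a real issue, not a formality.

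The paper closes this gap with an invariant you are missing: for each $h$ it maintains a set $J_h$ of \emph{current} rows such that $\vec g_h$ is an $\R$-linear combination of the rows in $J_h$ and every row in $J_h$ has value at most $\vecVal(\vec g_h)$. If the row being modified is not in $J_h$, the $h$-th reduction is a legitimate, invertible operation carried out via other rows. If it is in $J_h$, then \cref{lem:val_dec} gives $\vecVal(\vec v'_j) < \vecVal(\vec g_h)$, which forces $\deg v'_{j,h} < \deg(g_h x^{\gamma_h})$, so the $h$-th reduction is a no-op; the invariant is then repaired by replacing $\vec v_j$ in $J_h$ with $\vec v'_j$ and $\vec v_i$, both of value at most $\vecVal(\vec g_h)$. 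Some argument of this kind is needed before your induction on the row space goes through, and your complexity analysis in (ii) also silently relies on it (it uses $\deg U = \deg\det U = \deg\det \Phi_\w(M)$ for the terminal matrix, which presupposes that the row space was preserved).
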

\begin{proof}
  Correctness follows if we can show that each of the $\ell$ modulo reductions could have been achieved by a series of row operations on the current matrix $V$ after the simple LP-transformation producing $\vec v_j'$.
  For each $h \geq 1$, let $\vec g_h = (0,\ldots,0,\p g_h,0,\ldots,0)$, with position $h$ non-zero.

  During the algorithm, we will let $J_h$ be a subset of the current rows in $V$ having two properties: that $\vec g_h$ can be constructed as an $\R$-linear combination of the rows in $J_h$; and that each $\vec v \in J_h$ has $ \vecVal(\vec v) \leq \vecVal(\vec g_h)$.
  Initially, $J_h = \{ \vec g_h \}$.

  After simple LP-transformations on rows not in $J_h$, the $h$'th modulo reduction is therefore allowed, since $\vec g_h$ can be constructed by the rows in $J_h$.
  On the other hand, consider a simple LP-transformation $i$ on $j$ where $\vec v_j \in J_h$, resulting in the row $\vec v_j'$.
  Then the $h$'th modulo reduction has no effect since $\vecVal(\vec v_j') < \vecVal(\vec v_j) \leq \vecVal(\vec g_h)$.
  Afterwards, $J_h$ is updated as $J_h = J_h \setminus \{ \vec v_j \} \cup \{ \vec v_j', \vec v_i \}$.
  We see that $J_h$ then still satisfies the two properties, since $\vecVal(\vec v_i) \leq \vecVal(\vec v_j) \leq \vecVal(\vec g_h)$.

  Since $\vecVal(\vec v''_j) \leq \vecVal(\vec v'_j)$ the proof of \cref{thm:ms_complexity_general} shows that the number of simple LP-transformations performed is still bounded by $(\ell+1)(\OD{V}+\ell+1)$.
\end{proof}

\begin{algorithm}[t]
  \caption{Demand--Driven algorithm for MgLSSR}
  \label{alg:dd}
  \begin{algorithmic}[1]
    \Require{Instance of \cref{prob:skewpade}. $\p s_j \ass s_{1,j}x^{\gamma_j},\ \p g_j \ass g_jx^{\gamma_j}$ for $j=1,\ldots,\ell$. \;}
    \Ensure{The zeroth column of a basis of $\M$ of \eqref{eq:module_vectors} in $\w$-shifted weak Popov form. \;}
    \State $(\eta, h) \ass (\deg, \LPc)$ of $(x^{\gamma_0}, \p s_1, \ldots, \p s_\sigma)$.
    \Ifline{$h = 0$}{\Return $(1, 0, \ldots, 0)$.}
    \State $(\lambda_0,\ldots,\lambda_\ell) \ass (x^{\gamma_0},0,\ldots,0)$.
    \State $\alpha_jx^{\eta_j} \ass $ the leading monomial of $\p g_j$ for $j=1,\ldots,\ell$.
    \While{$\deg \lambda_0 \leq \eta$}
      \State $\alpha \ass $ coefficient to $x^\eta$ in $(\lambda_0\p s_h \mod \p g_h)$.
      \label{line:dd_coefficient}
      \If{$\alpha \neq 0$}
      \label{line:dd_ifstatement}
        \Ifline{$\eta < \eta_h$}{swap $(\lambda_0,\alpha,\eta)$ and $(\lambda_h, \alpha_h, \eta_h)$.}
        \State $\lambda_0 \ass \lambda_0 - \alpha/\theta^{\eta-\eta_h}(\alpha_h) x^{\eta-\eta_h} \lambda_h$.
      \label{line:dd_transformation}
      \EndIf
      \State $(\eta, h) \ass  (\eta, h-1) \textbf{ if } h>1 \textbf{ else } (\eta-1, \ell)$.
    \EndWhile
    \State \Return $\big(\lambda_0x^{-\eta_0},\ldots,\lambda_\ell x^{-\eta_0}\big)$.
  \end{algorithmic}
\end{algorithm}

\begin{theorem}\label{thm:dd_correctness}
  \cref{alg:dd} is correct.
\end{theorem}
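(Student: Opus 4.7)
The plan is to exhibit \cref{alg:dd} as a specific scheduling of the Mulders--Storjohann variant of \cref{lem:msVar} applied to the input matrix $\Phi_\w(M)$ of \eqref{eqn:phiM}, in which only the $0$th column $(\lambda_0,\ldots,\lambda_\ell)$ of the evolving working matrix is materialized; all other entries are reconstructed on demand from the bookkeeping. Once this correspondence is established, correctness reduces to \cref{lem:msVar}, \cref{thm:ms}, and \cref{thm:gab_rowred}. The critical enabling structural observation is that in $\Phi_\w(M)$, rows $1,\ldots,\ell$ each have a single non-zero entry $\p g_i$ at column $i$, so the MS variant can be scheduled so that LP-transformations always modify row $0$, never rows $1,\ldots,\ell$ directly (apart from the occasional swap between row $0$ and a row $h$).

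The heart of the argument is a loop invariant at the start of each iteration with indices $(\eta,h)$: (i) there is an implicit matrix $W$ obtained from $\Phi_\w(M)$ by a sequence of simple LP-transformations and reductions permitted by \cref{lem:msVar}, whose $0$th column is exactly $(\lambda_0,\ldots,\lambda_\ell)$ and for which, for every $i \geq 1$, $\LP{\vec w_i} = i$ with $\deg w_{i,i} = \eta_i$ and $\LC{w_{i,i}} = \alpha_i$; (ii) the on-demand expression $\lambda_0 \p s_h \modop \p g_h$ computed on \cref{line:dd_coefficient} equals the current $w_{0,h}$; (iii) for every pair $(\eta',h')$ lexicographically strictly greater than $(\eta,h)$ with $h' \in \{1,\ldots,\ell\}$, the coefficient of $x^{\eta'}$ in $w_{0,h'}$ is zero. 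The base case is immediate from the initialization. For the inductive step, I would case-split: when $\alpha = 0$, nothing changes and the invariant transfers to the successor of $(\eta,h)$ in the iteration order; when $\alpha \neq 0$ and $\eta \geq \eta_h$, the update on \cref{line:dd_transformation} implements exactly the simple LP-transformation of row $0$ by row $h$ required by the MS variant to cancel the $x^\eta$ term of $w_{0,h}$, and I would check that this update on $\lambda_0$ mirrors what the same LP-transformation does to the $0$th column; when $\alpha \neq 0$ and $\eta < \eta_h$, a swap is performed first, which corresponds to exchanging rows $0$ and $h$ in the implicit matrix, and here invariant (iii) is essential to confirm that the new row $h$ (the old row $0$) has leading position $h$ with degree $\eta$ and leading coefficient $\alpha$, after which an LP-transformation proceeds as in the previous case.

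Termination and output correctness then follow from the iteration schedule. The pair $(\eta,h)$ traverses positions $h \in \{1,\ldots,\ell\}$ at each degree $\eta$ in reverse lexicographic order, and by invariant (iii) all pairs with $\eta' \geq \deg \lambda_0$ are exhausted by the time the loop exits (i.e.~$\deg \lambda_0 > \eta$). At that moment, $w_{0,h'}$ has no term of degree $\geq \deg \lambda_0 = \deg w_{0,0}$ for any $h' \geq 1$, which forces $\LP{\vec w_0} = 0$; combined with (i), $W$ is in weak Popov form. Its rows are then a basis of $\Phi_\w(\M)$ by \cref{lem:msVar}, so the returned vector $(\lambda_0 x^{-\eta_0},\ldots,\lambda_\ell x^{-\eta_0})$ is the $0$th column of the corresponding un-shifted basis of $\M$ in $\w$-shifted weak Popov form, as the output specification requires. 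The principal obstacle is proving invariant (ii) cleanly across the swap case, verifying that the identity $\lambda_0 \p s_h \equiv w_{0,h} \pmod{\p g_h}$ remains valid after both the row exchange and the ensuing update, and in particular that the non-commutative twist factor $\theta^{\eta-\eta_h}(\alpha_h)$ in \cref{line:dd_transformation} is exactly the one needed so that the on-demand formula continues to match the MS semantics.
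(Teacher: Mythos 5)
Your plan follows essentially the same route as the paper's proof: it views \cref{alg:dd} as a scheduling of the \cref{lem:msVar} variant of Mulders--Storjohann on $\Phi_\w(M)$ in which only the zeroth column is stored, maintains the invariant $\LP{\vec v_h}=h$ for $h\ge 1$ via the swap, treats $(\eta,h)$ as a decreasing upper bound on the value of row $0$, and reconstructs $v_{0,h}$ on demand as $\lambda_0\p s_h \bmod \p g_h$ — exactly the two modifications the paper formalizes through its intermediate algorithm. The identification is sound; only the bookkeeping details you flag (invariant (ii) across the swap) remain to be written out, as they are in the paper.
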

\begin{proof}
  We first prove that an intermediary algorithm, \cref{alg:mid-dd}, is correct using the correctness of \cref{alg:ms}, and then prove the correctness of \cref{alg:dd} using \cref{alg:mid-dd} and \cref{lem:msVar}.
  Starting from \cref{alg:ms} with input $\Phi_\w(M)$, then \cref{alg:mid-dd} is obtained by two simple modifications:
  Firstly, note that initially, when $V := \Phi_\w(M)$, then $\LP{\vec v_h} = h$ for $h \geq 1$, and therefore the only possible simple LP-transformation must involve $\vec v_0$.
  We can maintain this property as a loop invariant throughout the algorithm by swapping $\vec v_0$ and $\vec v_{\LP{\vec v_0}}$ when applying a simple LP-transformation $\LP{\vec v_0}$ on $0$.

  The second modification is to maintain $(\eta, h)$ as an upper bound on the $(\deg, \LPc)$ of $\vec v_0$ throughout the algorithm: we initially simply compute these values.
  Whenever we have applied a simple LP-transformation on $\vec v_0$ resulting in $\vec v_0'$, we know by \cref{lem:val_dec} that $\vecVal(\vec v_0') < \vecVal(\vec v_0)$.
  Therefore, either $\deg \vec v_0' < \eta$ or $\deg \vec v_0' = \eta \land \LP{\vec v_0'} < h$.
  This is reflected in a corresponding decrement of $(\eta, h)$.

  As a loop invariant we therefore have $\vecVal(\vec v_0) \leq \eta(\ell+1) + h$.
  After an iteration, if this inequality is sharp, it simply implies that the $\alpha$ computed in the following iteration will be 0, and $(\eta, h)$ will be correspondingly decremented once more.
  Note that we never set $h = 0$: when $\LP{\vec v_0} = 0$ then $V$ must be in weak Popov form (since we already maintain $\LP{\vec v_h} = h$ for $h>0$).
  At this point, the while-loop will be exited since $\deg v_0 > \eta$.

\begin{algorithm}[t]
 \floatname{algorithm}{Intermediate Algorithm}
  \caption{for the correctness proof of \cref{alg:dd}}
  \label{alg:mid-dd}
  \begin{algorithmic}[1]
    \Require{Instance of \cref{prob:skewpade}.  $V \ass \Phi_\w(M)$ with $M$ as in \eqref{eqn:phiM}.}
    \Ensure{A basis $V'$ of $\M$ of \eqref{eq:module_vectors} in $\w$-shifted weak Popov form. \;}
    \State $(\eta, h) \ass (\deg, \LPc)$ of $\vec v_0$.
    \Ifline{$h = 0$}{\Return $\Phi_\w^\mo(V)$.}
    \While{$\deg v_{0,0} \leq \eta$}
      \State $\alpha \ass $ coefficient to $x^\eta$ in $v_{0,h}$.
      \label{line:mid-dd_coefficient}
      \If{$\alpha \neq 0$}
      \label{line:mid-dd_ifstatement}
        \State $\eta_h \ass \deg \vec v_h$.
        \label{line:mid-dd_degv}
        \State $\alpha_h \ass $ coefficient to $x^{\eta_h}$ in $v_{h,h}$.
        \Ifline{$\eta < \eta_h$}{swap $(\vec v_0, \alpha, \eta)$ and $(\vec v_h, \alpha_h, \eta_h)$.}
        \State $\vec v_0 \ass \vec v_0 - \alpha/\theta^{\eta-\eta_h}(\alpha_h) x^{\eta-\eta_h} \vec v_h$.
      \label{line:mid-dd_transformation}
      \EndIf
      \State $(\eta, h) \ass  (\eta, h-1) \textbf{ if } h>1 \textbf{ else } (\eta-1, \ell)$.
    \EndWhile
    \State \Return $\Phi_\w^\mo(V)$.
  \end{algorithmic}
\end{algorithm}

  \cref{alg:mid-dd} is then simply the implementation of these modifications, and writing out in full what the simple LP-transformation does to $\vec v_0$.
  This proves that \cref{alg:mid-dd} is operationally equivalent to \cref{alg:ms} with input $\Phi_\w(M)$.

  For obtaining \cref{alg:dd} from \cref{alg:mid-dd}, the idea is to store only the necessary part of $V$ and compute the rest on demand.
  Firstly, by \cref{lem:msVar} correctness would be maintained if the simple LP-transformation on \cref{line:mid-dd_transformation} of \cref{alg:mid-dd} was followed by the $\ell$ modulo reductions.
  In that case, we would have $v_{0,h} = (v_{0,0}\p s_h \mod \p g_h)$, so storing only $v_{0,0}$ suffice for reconstructing $\vec v_0$.
  Consequently we store the first column of $V$ in \cref{alg:dd} as $(\lambda_0,\ldots,\lambda_\ell)$.
  \cref{line:dd_coefficient} of \cref{alg:dd} is now the computation of the needed coefficient of $v_{0,h}$ at the latest possible time.

  As $\deg \vec v_h$ is used in \cref{line:mid-dd_degv} of \cref{alg:mid-dd}, we need to store and maintain this between iterations; this is the variables $\eta_1,\ldots,\eta_\ell$.
  To save some redundant computation of coefficients, the $x^{\eta_h}$-coefficient of $v_{h,h}$ is also stored as $\alpha_h$. 

  This proves that \cref{alg:dd} is operationally equivalent to \cref{alg:mid-dd}, which finishes the proof of correctness.
\end{proof}

\begin{proposition}
  \label{prop:dd-complexity}
  \cref{alg:dd} has computational complexity $O(\ell \param^2 + \sum_{h=1}^\ell \sum_{\eta=0}^{\param-1} T_{h,\eta})$, where
  $\param = \max_i \{\gamma_i +\deg g_i\}$ and $T_{h,\eta}$ bounds the complexity of running \cref{line:dd_coefficient} for those values of $h$ and $\eta$.
\end{proposition}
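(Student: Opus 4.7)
My plan is to split the per-iteration work of the while loop into three parts: (i) the coefficient evaluation on \cref{line:dd_coefficient}, counted as $T_{h,\eta}$ by hypothesis; (ii) the update of $\lambda_0$ on \cref{line:dd_transformation}, which occurs only when $\alpha \ne 0$; and (iii) $O(1)$ bookkeeping per iteration (tests, decrement, optional swap). I will show that (ii) contributes $O(\ell \param^2)$ in total, while (i) and (iii) together fit into $O(\ell \param^2 + \sum_{h,\eta} T_{h,\eta})$.

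For (ii), the proof of \cref{thm:dd_correctness} shows that \cref{alg:dd} is operationally equivalent to \cref{alg:ms} run on $\Phi_\w(M)$, with the $\alpha \ne 0$ branches in bijection with the simple LP-transformations performed by Mulders--Storjohann. By \cref{ex:MS_shift_register_input} we have $\OD{\Phi_\w(M)} \le \param - \gamma_0 \le \param$, so \cref{thm:ms_complexity_general} caps the number of such transformations at $(\ell+1)(\OD{\Phi_\w(M)} + \ell + 1) \in O(\ell \param)$. Each update $\lambda_0 \leftarrow \lambda_0 - \alpha/\theta^{\eta-\eta_h}(\alpha_h)\,x^{\eta - \eta_h}\,\lambda_h$ costs $O(\deg \lambda_h + 1) \subseteq O(\param)$ operations in $\K$ --- one application of $\theta^{\eta-\eta_h}$, one scalar multiplication and one subtraction per coefficient of $\lambda_h$. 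Multiplying yields the claimed $O(\ell \param^2)$.

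For (i) and (iii) I would bound the total iteration count $K$. Each iteration decrements the lex value $\eta(\ell+1) + h$ by exactly one, while a swap first increases it by $(\eta_h - \eta)(\ell+1)$. Crucially, $\eta_h$ shadows $\deg \vec v_h$ in the intermediate algorithm and is monotonically non-increasing, since Mulders--Storjohann never grows any row degree; moreover, a swap at position $h$ replaces $\eta_h$ by the strictly smaller pre-swap $\eta$. The upward jumps at position $h$ therefore telescope: $\sum_{\text{swaps at } h}(\eta_h^{\mathrm{pre}} - \eta^{\mathrm{pre}}) \le \eta_h^{\mathrm{initial}} \le \param$. Summing over $h$ gives total upward jumps $\le \ell \param$, and combined with the initial lex value $O(\ell \param)$, one gets $K \in O(\ell^2 \param) \subseteq O(\ell \param^2)$ in the regime $\ell \le \param$ that is implicit throughout the paper. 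Bookkeeping then sums to $O(K) = O(\ell \param^2)$, and the same telescoping ensures each $(h,\eta)$ is visited $O(1)$ times against the swap budget, so $\sum_t T_{h^{(t)},\eta^{(t)}} = O(\sum_{h,\eta} T_{h,\eta})$.

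The main subtlety is the visits-per-pair count, since swaps technically allow $(\eta,h)$ to revisit earlier lex values so one cannot simply argue monotonicity; it is the telescoping on $\sum_h \eta_h$ that controls how often this can happen, and tying this together with the $O(\ell\param)$ swap budget is the crux of the argument.
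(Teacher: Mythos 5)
Your decomposition and your treatment of \cref{line:dd_transformation} match the paper's proof: both invoke the operational equivalence with \cref{alg:ms} and the orthogonality-defect bound of \cref{thm:ms_complexity_general} to cap the number of simple LP-transformations at $O(\ell\param)$, each costing $O(\param)$. Where you go further is in taking the swaps seriously: the paper's proof simply asserts that every iteration decreases an upper bound on the value of row $0$ and concludes that each pair $(h,\eta)$ is taken at most once, without addressing the fact that a swap resets $\eta$ to the larger $\eta_h$ and sends the counter back through lexicographic territory it has already traversed. Your telescoping observation --- that $\eta_h$ is non-increasing, each swap at position $h$ replacing it by the strictly smaller pre-swap $\eta$, so the upward jumps at a fixed $h$ sum to at most $\param$ --- is correct and yields a sound bound of $O(\ell^2\param)\subseteq O(\ell\param^2)$ on the iteration count and hence on the bookkeeping.

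The gap is in your final claim that the same telescoping gives $O(1)$ visits per pair. What the telescoping actually shows is that the jump intervals at a \emph{fixed} position $h$ are essentially disjoint, so a given pair $(h_0,\eta_0)$ can be crossed by at most two jumps originating from each of the $\ell$ positions; this bounds the number of visits to $(h_0,\eta_0)$ by $O(\ell)$, not $O(1)$, and $\Theta(\ell)$ revisits of a single pair can actually occur (e.g.\ when all the degrees $\gamma_h+\deg g_h$ are close to $\param$ and each position swaps once near the top level before the counter descends further). As stated, your argument therefore only yields $O(\ell\param^2 + \ell\sum_{h,\eta}T_{h,\eta})$ for the coefficient computations, which is weaker than the claimed bound and would propagate an extra factor of $\ell$ into \cref{thm:dd_fastonsimple}. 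To be clear, this is not a defect you introduced: the paper's own proof claims the even stronger ``each $(h,\eta)$ taken at most once'' while ignoring the swaps entirely, so you have correctly located the crux of the matter --- but neither your sketch nor the paper's supplies the argument that controls how often \cref{line:dd_coefficient} is re-executed at the same $(h,\eta)$ after a swap.
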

\begin{proof}
  Clearly, all steps of the algorithm are essentially free except \cref{line:dd_coefficient} and \cref{line:dd_transformation}.
  Observe that every iteration of the while-loop decrease an \emph{upper bound} on the value of row 0, whether we enter the if-branch in \cref{line:dd_ifstatement} or not.
  So by the arguments of the proof of \cref{thm:ms_complexity_general}, the loop will iterate at most $O(\ell \param)$ times in which each possible value of $(h, \eta) \in \{1,\ldots,\ell\} \times \{0,\ldots,\param-1\}$ will be taken at most once.
  Each execution of \cref{line:dd_transformation} costs $O(\param)$ since the $\lambda_j$ all have degree at most $\param$.
\end{proof}

It is possible to use \cref{prop:dd-complexity} to show that \cref{alg:dd} is efficient if e.g.~all the $g_i$ have few non-zero monomials\footnote{%
  In the conference version of this paper \cite{li2015solving}, we erroneously claimed a too strong statement concerning this. However, one \emph{can} relate the complexity of \cref{alg:dd} to the number of non-zero monomials of $g_i$, as long as all but the leading monomial have low degree; however the precise statement becomes cumbersome and is not very relevant for this paper.
  }%
. We will restrict ourselves to a simpler case which nonetheless has high relevance for coding theory:
\begin{theorem}
  \label{thm:dd_fastonsimple}
  \cref{alg:dd} can be realised with complexity $O(\ell \mu^2)$ if $g_i = x^{d_i} + a_i$ for $a_i \in \Fq$ for all $i$, where $\param = \max_i \{\gamma_i +\deg g_i\}$.
\end{theorem}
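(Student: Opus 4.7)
My plan is to invoke \cref{prop:dd-complexity}, which reduces the claim to showing that each evaluation of \cref{line:dd_coefficient} takes $O(\mu)$ operations after an $O(\ell \mu^2)$ preprocessing step: since the while-loop of \cref{alg:dd} iterates at most $\ell \mu$ times, this yields the stated bound.

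The structural feature I would exploit is that $\tilde g_h = g_h x^{\gamma_h} = x^{d_h + \gamma_h} + a_h x^{\gamma_h}$ is a binomial whose low-degree coefficient $a_h$ lies in the fixed field $\Fq$ of $\theta$. Hence $\theta^j(a_h) = a_h$ for every $j \geq 0$, and the reduction rule $x^k \equiv -a_h\, x^{k - d_h} \pmod{\tilde g_h}$ (valid for $k \geq d_h + \gamma_h$) can be iterated without ever picking up a Frobenius twist. Combined with the skew-product identity $[x^k](\lambda \tilde s_h) = \sum_i \lambda_i\, \theta^i((\tilde s_h)_{k - i})$, this shows that the quantity computed by \cref{line:dd_coefficient} is a $\K$-linear functional of the coefficient vector of $\lambda_0$:
\[
    [x^\eta](\lambda_0\, \tilde s_h \bmod \tilde g_h) \;=\; \sum_i (\lambda_0)_i\, N^{(h)}_{i,\eta},
\]
with constants $N^{(h)}_{i,\eta} := [x^\eta](x^i\, \tilde s_h \bmod \tilde g_h) \in \K$ that depend on $\tilde s_h, a_h, d_h, \gamma_h$ but not on $\lambda_0$.

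The preprocessing then reduces to tabulating all $N^{(h)}_{i,\eta}$ for $h \in \{1,\ldots,\ell\}$, $i \in \{0,\ldots,\mu\}$, and $\eta \in \{0,\ldots, d_h+\gamma_h-1\}$. Expanding the definition gives $N^{(h)}_{i,\eta} = \sum_{j \geq 0} (-a_h)^j\, \theta^i((\tilde s_h)_{\eta + j d_h - i})$, which satisfies the simple recurrence $N^{(h)}_{i,\eta} = \theta^i((\tilde s_h)_{\eta - i}) - a_h\, N^{(h)}_{i,\eta + d_h}$, with the last term dropped when its index is out of range or when $\eta < \gamma_h$. Filling the table from large $\eta$ downward costs $O(1)$ per entry under the paper's $O(1)$-Frobenius assumption (in particular, each $\theta^i((\tilde s_h)_l)$ can be computed incrementally from $\theta^{i-1}((\tilde s_h)_l)$), so the total preprocessing is $O(\ell \mu^2)$. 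With the tables built, each call to \cref{line:dd_coefficient} is a dot product of length at most $\mu + 1$, so $T_{h,\eta} \in O(\mu)$ and \cref{prop:dd-complexity} delivers the claimed complexity. The main obstacle I anticipate is purely bookkeeping: verifying that the repeated-folding identity for $\tilde g_h$ interacts cleanly with the Frobenius powers $\theta^i$ picked up from right-multiplication by $\tilde s_h$, and that the dichotomy $\eta < \gamma_h$ versus $\eta \geq \gamma_h$ in the folding is correctly absorbed into the recurrence's boundary conditions.
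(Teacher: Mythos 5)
Your proposal is correct, and it reaches the $O(\ell\param^2)$ bound by a genuinely different realisation of the key step than the paper's proof. Both arguments start from \cref{prop:dd-complexity} and both exploit the same structural fact — that $\tilde g_h = x^{d_h+\gamma_h} + a_h x^{\gamma_h}$ with $a_h$ in the fixed field gives the twist-free folding rule $x^k \equiv -a_h x^{k-d_h}$ — but they organise the cost differently. The paper computes each requested coefficient \emph{on demand}: after reducing to $\gamma_h = 0$, it observes that the query is trivially $0$ for $\eta \geq d_h$, and for $\eta < d_h$ it is a linear combination of $O(\param/d_h)$ coefficients of the unreduced product $\lambda_0 s_h$, each obtained by an $O(\param)$ convolution; the total $\sum_\eta T_{h,\eta} \in O(\param^2)$ then comes from an amortisation in which the per-query cost $O(\param^2/d_h)$ is paid for only $d_h$ values of $\eta$. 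You instead make \emph{every} query uniformly cheap, $T_{h,\eta} \in O(\param)$, by precomputing the linear functionals $N^{(h)}_{i,\eta}$ in $O(\ell\param^2)$ and answering each query as a length-$O(\param)$ dot product. Your route buys a uniform per-query bound (and a table that would survive $g_i$ with more low-degree monomials, at the price of a denser recurrence), at the cost of $O(\ell\param^2)$ memory; the paper's route needs no preprocessing or storage beyond the inputs.

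One bookkeeping point you should make explicit when writing this up: the recurrence
\[
  N^{(h)}_{i,\eta} \;=\; \theta^i\big((\tilde s_h)_{\eta-i}\big) - a_h\, N^{(h)}_{i,\eta+d_h}
\]
does not hold if $N^{(h)}_{i,\eta+d_h}$ is read as a coefficient of the \emph{remainder}, because for $\eta \geq \gamma_h$ the index $\eta + d_h$ lies at or above $\deg \tilde g_h$, where the remainder coefficient is $0$ while the folded partial sum $\sum_{j\geq 0}(-a_h)^j\theta^i((\tilde s_h)_{\eta+d_h+jd_h-i})$ is not. The clean fix is to recurse on that auxiliary sum for all $k$ up to $\deg(x^i\tilde s_h) \leq 2\param$ (still $O(\param)$ entries per pair $(h,i)$, so the preprocessing budget is unaffected), and only afterwards identify $N^{(h)}_{i,\eta}$ with the auxiliary value for $\gamma_h \leq \eta < d_h+\gamma_h$ and with the single term $\theta^i((\tilde s_h)_{\eta-i})$ for $\eta < \gamma_h$. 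This is exactly the boundary-condition issue you anticipated, and it is fixable as described; with that adjustment the argument is complete.
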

\begin{proof}
  We will bound $\sum_{\eta=0}^{\param-1} T_{h,\eta}$ of \cref{prop:dd-complexity}.
  Note first that for any $\eta$, the coefficient $\alpha$ to $x^\eta$ in $(\lambda \tilde{s}_h \modop \tilde{g}_h)$ equals the coefficient to $x^{\eta-\gamma_h}$ of $(\lambda s_h \modop g_h)$, so considering $\gamma_h = 0$ suffice.
  Now if $\eta \geq d_h$ then $\alpha = 0$ and can be returned immediately.
  If $\eta < d_h$, then due to the assumed shape of $g_i$, $\alpha$ is a linear combination of the coefficients to $x^{\eta}, x^{\eta+d_h}, \ldots, x^{\eta + t d_h}$ in $\lambda_0 s_h$, where $t = \left\lfloor \frac {\param - \eta} {d_h} \right\rfloor$.
  Each such coefficient can be computed by convolution of $\lambda_0$ and $s_h$ in $O(\param)$, so it costs $O(\frac{\param^2} d)$ to compute $\alpha$.
  Summing over all choices of $\eta$, we have $\sum_{\eta=0}^{\param-1} T_{h,\eta} \in O(\param^2)$ and the theorem follows from \cref{prop:dd-complexity}.
\end{proof}

\subsection{Weak Popov Walking}
\label{ssec:wpfwalk}

The goal of this section is to arrive at a faster row reduction algorithm for the matrices used for decoding Mahdavifar--Vardy codes in \cref{sec:mv_codes}.
However, the algorithm we describe could be of much broader interest: it is essentially an improved way of computing a $\vec w$-weak Popov form of a matrix which is already in $\vec w'$-weak Popov form, for a shift $\vec w'$ which is not too far from $\vec w$.
Inspired by ``Gr\"obner walks'', we have dubbed this strategy ``weak Popov walking''.
Each ``step'' of the walk can be seen as just \cref{alg:ms} but where we carefully choose which LP-transformations to apply each iteration, in case there is choice.

This strategy would work completely equivalently for the $\K[x]$ case.
However, to the best of our knowledge, that has not been done before.

In this section we will extensively discuss vectors under different shifts.
To ease the notation we therefore introduce shifted versions of the following operators: $\LP[\w]{\v} := \LP{\Phi_\w(\v)}$ as well as $\deg_\w(\v) := \deg \Phi_\w(\v)$.

We begin by \cref{alg:wpfwalk} that efficiently ``walks'' from a weak Popov form according to the shift $\w$ into one with the shift $\w + (1,0,\ldots,0)$.
The approach can readily be generalised to support increment on any index, but we do not need it for the decoding problem so we omit the generalisation to simplify notation.

\def\hv{{\vec{\hat v}}}
\def\hw{{\vec{\hat w}}}

\begin{algorithm}[H]
  \caption{Weak Popov Walking}
  \label{alg:wpfwalk}
  \begin{algorithmic}[1]
    \Require{Shift $\w \in \NN^m$ and matrix $V \in \R^{m \times m}$ in $\w$-shifted weak Popov form.}
    \Ensure{Matrix in $\hw$-shifted weak Popov form spanning the same $\R$-row space as $V$, where $\hw = \w + (1,0,\ldots,0)$.}
    \State $h_i \ass \LP[\w]{\v_i}$, for $i=0,\ldots,m-1$.
    \State $I \ass $ indexes $i$ such that $\LP[\hw]{\v_i} = 0$.
    \State $[i_1,\ldots,i_s] \ass I$ sorted such that $h_{i_1} < h_{i_2} < \ldots < h_{i_s}$.
    \State $t \ass i_1$.
    \For{$i = i_2, \ldots, i_s$}
      \If{$\deg v_{t,0} \leq \deg v_{i,0}$}
        \State Apply a simple transformation $t$ on $i$ at position $0$ in $V$.
          \label{line:wpfwalk:trans1}
      \Else
        \State Apply a simple transformation $i$ on $t$ at position $0$ in $V$.
          \label{line:wpfwalk:trans2}
        \State $t \ass i$.
      \EndIf
    \EndFor
    \State \Return $V$.
  \end{algorithmic}
\end{algorithm}

\begin{theorem}
  \label{thm:wpfwalk_correct}
  \cref{alg:wpfwalk} is correct.
\end{theorem}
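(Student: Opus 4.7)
The plan is to prove correctness by verifying three things in order: (i) every operation preserves the $\R$-row space of $V$; (ii) every operation performed is a valid simple LP-transformation on $\Phi_{\hw}(V)$; and (iii) at termination $\Phi_{\hw}(V)$ is in weak Popov form. Item (i) is immediate since simple transformations are elementary row operations. For (ii) I need that at the start of iteration $j$ both $\vec v_t$ and $\vec v_{i_j}$ still lie in $I$, i.e.~both have $\LP[\hw]=0$. The row $\vec v_{i_j}$ is clearly untouched since it hasn't been visited. For $\vec v_t$: the current $t$ was either $i_1$ or was set to some $i_k$ during an earlier Case~B step; between then and iteration $j$ every iteration must have been Case~A (Case~B would have replaced $t$), and Case~A modifies only the newly processed row. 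So both rows are fresh, both have $\LP[\hw]=0$, and because $\Phi_{\hw}$ is left $\R$-linear a simple transformation at position $0$ on $V$ is exactly a simple LP-transformation on $\Phi_{\hw}(V)$.

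The heart is item (iii). Rows with index $i\notin I$ are never touched and a direct case check gives $\LP[\hw]{\v_i}=h_i$, which is nonzero by the definition of $I$. The crucial claim I aim to prove is: \emph{the row modified in iteration $j$ ends the algorithm with $\LP[\hw]=h_{i_j}$.} Granting this, the $s-1$ modified rows acquire the pairwise distinct leading positions $h_{i_2},\dots,h_{i_s}$, each positive because $h_{i_j}>h_{i_1}\geq 0$ for $j\geq 2$; the surviving $t$ retains $\LP[\hw]=0$; and the rows outside $I$ occupy the remaining $h_i$'s. All these leading positions are pairwise distinct (the nonzero $h_i$'s are distinct by the initial $\w$-shifted weak Popov form, and $0$ cannot coincide with any $h_i$ for $i\notin I$), yielding $m$ distinct values and hence weak Popov form.

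To prove the crucial claim, fix iteration $j$, write $i:=i_j$, and let $t$ denote the current pivot. Since both $\vec v_t,\vec v_i\in I$ and $0$ is the \emph{rightmost} maximum in both $\Phi_{\hw}(\v_t)$ and $\Phi_{\hw}(\v_i)$, position $0$ also achieves the maximum $\w$-degree of each row, so $\deg_\w(v_{t,0})=\deg_\w(\v_t)$, $\deg_\w(v_{i,0})=\deg_\w(\v_i)$, and $\beta=\deg v_{i,0}-\deg v_{t,0}=\deg_\w(\v_i)-\deg_\w(\v_t)$. In Case~A, $\v'_i=\v_i-\alpha x^\beta \v_t$. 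At position $h_i$ the subtracted term has $\w$-degree $\beta+\deg_\w(v_{t,h_i})<\beta+\deg_\w(\v_t)=\deg_\w(\v_i)=\deg_\w(v_{i,h_i})$, the strict inequality holding because $h_i>h_t=\LP[\w]{\v_t}$ and $h_t$ is the rightmost maximum; hence $\deg_\w(v'_{i,h_i})=\deg_\w(\v_i)$. At positions $k>h_i$, both $v_{i,k}$ and $\alpha x^\beta v_{t,k}$ are strictly below $\deg_\w(\v_i)$, so $\deg_\w(v'_{i,k})<\deg_\w(\v_i)$. Thus $\LP[\w]{\v'_i}=h_i$. Passing to $\hw$ only raises position $0$'s degree by one, and since the leading term at position $0$ was cancelled we have $\deg_\w(v'_{i,0})<\deg_\w(\v_i)$, so the increase cannot overtake position $h_i$, giving $\LP[\hw]{\v'_i}=h_i$. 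Case~B is symmetric: $\v'_t=\v_t-\alpha x^\beta \v_i$ with $\beta=\deg_\w(\v_t)-\deg_\w(\v_i)>0$, and the identical position-by-position analysis gives $\LP[\hw]{\v'_t}=h_i$.

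The main obstacle is this last degree accounting: one must simultaneously use the $I$-membership of both rows (which pins down the $\w$-degree at position $0$ and forces the relation $\beta=\deg_\w(\v_i)-\deg_\w(\v_t)$) and the rightmost-max property of $\w$-weak Popov form (which forces $\deg_\w(v_{t,h_i})<\deg_\w(\v_t)$) to show that the preserved leading position is exactly $h_i$ rather than somewhere else. Everything else in the proof is standard bookkeeping.
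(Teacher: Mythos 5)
Your proof is correct and follows essentially the same route as the paper's: the same single-sweep bookkeeping, the same key identity that for $i\in I$ position $0$ ties with position $h_i$ in $\w$-shifted degree, the same degree chase showing the row modified in iteration $j$ acquires $\hw$-leading position $h_{i_j}$ (using $h_t<h_{i_j}$ from the sorting), and the same final distinctness argument. The only bits you leave implicit (that positions $k<h_i$ cannot exceed the preserved maximum, and that $h_i=0$ forces $i\in I$) are one-liners, so no gap.
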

\begin{proof}
  Denote in this proof $V$ as the input and $\hat V$ as the output of the algorithm.
  The algorithm performs a single sweep of simple transformations, modifying only rows indexed by $I$: in particular, if $\v_i,\hv_i$ are the rows of $V$ respectively $\hat V$, then either $\hv_i = \v_i$, or $\hv_i$ is the result of a simple transformation on $\v_i$ by another row $\v_j$ of $V$ and $i,j \in I$.
  All the $h_i$ are different since $V$ is in $\w$-shifted weak Popov form.
  We will show that the $\hw$-shifted leading positions of $\hat V$ is a permutation of the $h_i$, implying that $\hat V$ is in $\hw$-shifted weak Popov form.

  Note first that for any vector $\v \in \R^m$ with $\LP[\w]{\v} \neq \LP[\hw]{\v}$, then $\LP[\hw]{\v} = 0$, since only the degree of the $0$'th position of $\Phi_\hw(\v)$ is different from the corresponding position of $\Phi_\w(\v)$.
  For each $i \in \{0,\ldots,m-1\} \setminus I$ we have $\hv_i = \v_i$ and so $\LP[\hw]{\hv_i} = h_i$.
  And of course for each $i \in I$ we have $\LP[\hw]{\v_i} = 0$.
  This implies for each $j \in I$ that:
  \begin{equation}
    \label{eqn:wpfwalk:degreerel}
    \deg v_{j,0} + w_0 = \deg v_{j,h_j} + w_{h_j} \ .
  \end{equation}

  Consider first an index $i \in I$ for which \cref{line:wpfwalk:trans1} was run, and let $t$ be as at that point. 
  This means $\hv_i = \v_i + \alpha x^\delta \v_t$ for some $\alpha \in \K$ and $\delta = \deg v_{i,0} - \deg v_{t,0}$.
  Note that the if-condition ensures $\delta \geq 0$ and the simple transformation makes sense.
  We will establish that $\LP[\hw]{\hv_i} = h_i$.
  Since we are performing an LP-transformation, we know that $\deg_\hw \hv_i \leq \deg_\hw \v_i$, so we are done if we can show that $\deg \hat v_{i,h_i} = \deg v_{i,h_i}$ and $\deg \hat v_{i, k} + w_k < \deg v_{i,h_i} + w_{h_i}$ for $k > h_i$.
  This in turn will follow if $\alpha x^\delta \v_t$ has $\hw$-weighted degree less than $\deg v_{i,h_i} + w_{h_i}$ on all position $k \geq h_i$.
  
  Due to $\LP[\w]{\v_t} = h_t$ and \eqref{eqn:wpfwalk:degreerel} for index $t$ then for any $k > h_t$:
  \begin{equation}
    \label{eqn:wpfalk:degreerel2}
    \deg v_{t, k} + w_k < \deg v_{t, h_t} + w_{h_t} = \deg v_{t,0} + w_0 \ .
  \end{equation}
  Using $\deg v_{t,0} + \delta = \deg v_{i,0}$ and \eqref{eqn:wpfwalk:degreerel} for index $i$, we conclude that 
  \[
    \deg v_{t, k} + w_k + \delta < \deg v_{i,0} + w_{0} = \deg v_{i,h_i} + w_{h_i} \ .
  \]
  Since $h_t < h_i$ by the ordering of the $i_\star$, this shows that $\deg v_{i, k} + w_k + \delta < \deg v_{i,h_i} + w_{h_i}$ for $k \geq h_i$.
  These are the degree bounds we sought and so $\LP[\hw]{\hv_i} = h_i$.

  Consider now an $i \in I$ for which \cref{line:wpfwalk:trans2} was run, and let again $t$ be as at that point, before the reassignment.
  The situation is completely reversed according to before, so by analogous arguments $\LP[\hw]{\hv_t} = h_i$.

  For the value of $t$ at the end of the algorithm, then clearly $\LP[\hw]{\hv_t} = 0$ since the row was not modified.
  Since we necessarily have $h_{i_1} = 0$, then $\LP[\hw]{\hv_t} = h_{i_1}$.
  Thus every $h_i$ becomes the $\hw$-leading position of one of the $\vec v_j$ exactly once.
  But the $h_i$ were all different, and so $\hat V$ is in $\hw$-shifted weak Popov form.
\end{proof}

\begin{proposition}
  \label{prop:wfpwalk_complexity}
  \cref{alg:wpfwalk} performs at most
  \[
    \textstyle
    O\big(m\deg \det(V) + \sum_{i < j}|w_i - w_j| + m^2\big)
  \]
 operations over $\R$.
\end{proposition}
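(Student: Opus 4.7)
The plan is to bound the algorithm's cost in three parts: first count the iterations, then bound the per-iteration cost in terms of the source row, and finally sum these costs using the $\w$-shifted weak Popov structure of $V$. The third step is where the main technical care is needed.

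The for-loop of \cref{alg:wpfwalk} performs exactly $|I|-1 \leq m-1$ simple transformations, each at position $0$. A simple transformation $\vec v_d \leftarrow \vec v_d - \alpha x^\delta \vec v_s$ processes each nonzero coefficient of the source exactly once, so its cost is $O(|\vec v_s|)$ where $|\vec v_s| := \sum_{k : v_{s,k}\neq 0}(\deg v_{s,k}+1)$. An observation that will be critical for the later summation is that the source row is always an \emph{unmodified} input row: in case (a) the source $\vec v_t$ has never been touched (either $t = i_1$ initially, or $t$ was last set by an earlier case (b), where $t$ was then assigned to that iteration's $\vec v_{i_j}$, itself the source of that iteration and hence unmodified); in case (b) the source $\vec v_{i_j}$ is used before any prior iteration could have modified it.

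For bounding the row sizes, use that $V$ is in $\w$-shifted weak Popov form, so $\deg v_{r,k} + w_k \leq \deg v_{r,h_r} + w_{h_r}$, hence $\deg v_{r,k} \leq \deg v_{r,h_r} + w_{h_r} - w_k$ for all $k$. Summing column-wise yields $|\vec v_r| \leq m (\deg v_{r, h_r} + 1) + \sum_k (w_{h_r} - w_k)_+$, and therefore
\[
  \textstyle
  \sum_{r \in I}|\vec v_r| \;\leq\; m \sum_{r \in I}\deg v_{r,h_r} + m|I| + \sum_{r \in I} \sum_k (w_{h_r}-w_k)_+.
\]
Since $\Phi_\w(V)$ is in weak Popov form, \cref{prop:ODzero} combined with \cref{cor:degdet} gives $\sum_{i=0}^{m-1} \deg v_{i,h_i} = \deg\det V$; and because the $h_r$'s for $r \in I$ are distinct indices in $\{0,\dots,m-1\}$, the last double sum is at most $\sum_{i,j}(w_i-w_j)_+ = \sum_{i<j}|w_i-w_j|$. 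This gives $\sum_{r \in I}|\vec v_r| = O(m \deg\det V + \sum_{i<j}|w_i-w_j| + m^2)$.

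The remaining, and most delicate, step is to argue that the total cost $\sum_{\mathrm{iter}}|\vec v_s|$ stays within a constant multiple of $\sum_{r \in I}|\vec v_r|$, rather than ballooning to $(|I|-1)\,\max_r |\vec v_r|$. The idea is to group iterations into \emph{phases}---each a maximal run of case (a) iterations sharing the same $t$, possibly terminated by a single case (b)---and exploit that the algorithm's selection rule forces $\deg v_{t_q, 0}$ to be strictly decreasing across phases $q = 0,1,\ldots$, while every case (a) destination in phase $q$ satisfies $\deg v_{i_j, 0} \geq \deg v_{t_q, 0}$. The main obstacle will be a careful column-by-column accounting: for each column $k$ one bounds $\sum_{\mathrm{iter}}(\deg v_{s,k}+1)_+$ by charging the $L_q$ repetitions of $\vec v_{t_q}$ against the corresponding destination contributions, so that summing over $k$ reproduces the bound on $\sum_{r \in I}|\vec v_r|$ without accruing an extra factor of $m$ in the shift term.
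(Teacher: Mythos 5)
Your first three steps are sound, and your bound on $\sum_{r\in I}|\vec v_r|$ is a clean route to essentially the same quantity the paper bounds (the paper instead maximises the total monomial count of $V$ over all admissible row-degree assignments consistent with $\OD{\Phi_\w(V)}=0$). The genuine problem is that the step you yourself flag as the most delicate one is never carried out: you announce a phase decomposition and a column-by-column charging scheme, but you do not execute it, and as described the charging target does not work. If you charge the $L_q$ uses of the repeated source $\vec v_{t_q}$ against the \emph{actual} monomial counts $|\vec v_{i_j}|$ of the corresponding destinations, the argument fails: a destination row may be sparse (say, one monomial in column $0$ and one in column $h_{i_j}$) while $\vec v_{t_q}$ is dense, so $L_q|\vec v_{t_q}|$ need not be $O\big(\sum_j|\vec v_{i_j}|\big)$. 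As written, the proof is therefore incomplete at its crux.

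The step can be closed, but the charging must go through the degree profile rather than raw monomial counts. For $r\in I$ one has $\deg v_{r,k}\le \deg v_{r,0}+w_0-w_k$ for all $k$ (combine $\LP[\w]{\vec v_r}=h_r$ with $\deg v_{r,0}+w_0=\deg v_{r,h_r}+w_{h_r}$), hence $|\vec v_r|\le B(\deg v_{r,0})$ where $B(d):=\sum_k\max(0,\,d+w_0-w_k+1)$ is non-decreasing in $d$. In the iteration with loop index $i$ the source $s$ is whichever of $t,i$ has the smaller $0$-column degree, so $\deg v_{s,0}\le\deg v_{i,0}$ and that iteration costs $O(B(\deg v_{i,0}))$; since the loop indices $i_2,\dots,i_s$ are distinct, the total cost is at most $\sum_{r\in I}B(\deg v_{r,0})$, and $B(\deg v_{r,0})\le m(\deg v_{r,h_r}+1)+\sum_k(w_{h_r}-w_k)_+$ then recovers exactly the bound you derived in your fourth step. (For what it is worth, the paper's own proof elides this point by asserting that all transformations use distinct rows of the input matrix; that is true of the destinations but not of the sources, so your instinct that something remains to be argued here is correct --- you just have to finish the argument.)
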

\begin{proof}
  We will bound the number of non-zero monomials which are involved in simple transformations.
  As remarked in the proof of \cref{thm:wpfwalk_correct}, all simple transformations are done using distinct rows of the input matrix, so it suffices to bound the total number of monomials in the input matrix $V$.

  Since we are then simply counting monomials in $V$, we can assume w.l.o.g.~that $w_0 \leq w_1 \leq \ldots \leq w_{m-1}$, and since the input matrix $V$ was in $\w$-shifted weak Popov form, assume also w.l.o.g~that we have sorted the rows such that $\LP[\w]{\v_i} = i$.
  Since $\OD{\Phi_\w(V)} = 0$ we have
  \[
    \deg \det \Phi_\w(V) = \deg_\w V 
    \qquad \textrm{ that is }\qquad 
    \textstyle \deg_\w V = \deg \det V + \sum_i w_i \ .
  \]
  We can therefore consider the assignment of $\deg_\w$ to the individual rows of $V$ under these constraints that will maximise the possible number of monomials in $V$.
  We cannot have $\deg_\w \v_i < w_i$ since $\LP[\w]{\v_i} = i$.
  It is easy to see that the worst-case assignment is then to have exactly $\deg_\w \v_i = w_i$ for $i=0,\ldots,m-2$ and $\deg_\w \v_{m-1} = \deg \det V + w_{m-1}$.
  In this case, for $i < m-1$ then $\deg v_{i,j} \leq  w_i - w_j$ if $j \leq i$ and $v_{i,j} = 0$ if $j > i$, so the number of monomials can then be bounded as
  \begin{IEEEeqnarray*}{rCl}
    && \left( \sum_{i=0}^{m-2} \sum_{j=0}^{i}(w_i - w_j + 1) \right) + \left(\sum_{j=0}^{m-1}(\deg\det V + w_{m-1} - w_j + 1) \right) \\
   &\leq& m^2 + \sum_{i < j}(w_j - w_i) + m\deg\det V
   \ .
   \\[-3em]
  \end{IEEEeqnarray*}
\end{proof}

The idea is now to iterate \cref{alg:wpfwalk} to ``walk'' from a matrix that is in weak Popov form for one shift $\w$ into another one $\hw$.
Row reducing the matrix for the MV codes can be done as \cref{alg:mv_wpfwalk}.

\begin{algorithm}[H]
  \caption{Find MV Interpolation Polynomial by Weak Popov Walk}
  \label{alg:mv_wpfwalk}
  \begin{algorithmic}[1]
    \Require{Instance of \cref{prob:mv_interpolation} and the matrix $V \ass M$ of \eqref{eq:M_MV} on page \pageref{eq:M_MV}}
    \Ensure{A $\vec w$-shifted weak Popov form of $M$}
    \State $\vec w = \big(0, (k-1), 2(k-1),\ldots, \ell(k-1) \big)$.
    \State $\vec w' = \vec w + \big(0, n, n, \ldots, n\big)$.
      \label{line:mv_wpfwalk:shiftsinit}
    \For{$i = 0, ..., n-1$}
      \State $V \ass \mathrm{WeakPopovWalk}(V, \vec w')$.
      \State $\vec w' \ass \vec w' + (1,0,\ldots,0)$.
    \EndFor
    \State \Return $V$
  \end{algorithmic}
\end{algorithm}

\begin{theorem}\label{thm:wpf_correctness}
  \cref{alg:mv_wpfwalk} is correct.
  It has complexity $O(\ell n^2)$ over $\Fqs$.
\end{theorem}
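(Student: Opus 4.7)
The plan is to show (i) that the initial matrix $M$ is already in $\vec w'$-shifted weak Popov form for the starting shift $\vec w' = \vec w + (0,n,\ldots,n)$, (ii) that each iteration of the loop then correctly transforms $V$ into $\vec w''$-shifted weak Popov form for $\vec w'' = \vec w' + (1,0,\ldots,0)$ by \cref{thm:wpfwalk_correct}, and (iii) that after $n$ iterations the resulting shift $\vec w + (n,n,\ldots,n)$ differs from $\vec w$ only by a constant vector, and hence the final matrix is in $\vec w$-shifted weak Popov form.

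For (i), I would simply compute the $\vec w'$-weighted degrees of the rows of $M$: the row $\m_0$ has $\deg_{\vec w'}(\m_0) = \deg G = n$ with $\LP[\vec w']{\m_0} = 0$, while for $t \geq 1$ the row $\m_t$ has $\deg(-R_t) \leq n-1$ in position $0$ and a $1$ in position $t$, so $\deg_{\vec w'}(\m_t) = t(k-1) + n$ and $\LP[\vec w']{\m_t} = t$. Thus all leading positions are distinct. For (iii), one observes that $\Phi_{\vec u + c\vec 1}(\v) = \Phi_{\vec u}(\v) \cdot x^c$ component-wise, so adding a constant shift preserves the leading-position pattern and hence the weak Popov property. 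Combining with \cref{thm:wpfwalk_correct} applied inductively finishes correctness.

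For the complexity bound, the plan is to apply \cref{prop:wfpwalk_complexity} to each of the $n$ calls with $m = \ell+1$. Crucially, $\deg\det V$ is invariant under row operations by \cref{cor:degdet}, and $\deg\det M = \deg G = n$ since $M$ is upper triangular with diagonal $(G, 1, \ldots, 1)$. So each call costs $O\bigl((\ell+1)n + \sum_{i<j}|w'_i - w'_j| + (\ell+1)^2\bigr)$ monomial operations over $\Fqs$. The shift after $a < n$ iterations is $\vec w' = (a,\, (k-1)+n,\, 2(k-1)+n,\, \ldots,\, \ell(k-1)+n)$. Pairs of the form $(0,j)$ contribute at most $n + \ell(k-1)$ each, and pairs $(i,j)$ with $1 \le i < j \le \ell$ contribute $(j-i)(k-1)$, summing to $(k-1)\binom{\ell+1}{3}$. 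Using the hypothesis $\binom{\ell+1}{2}(k-1) < n$ one gets $(k-1)\binom{\ell+1}{3} = O(n\ell)$ and $\ell(k-1) = O(n)$, so $\sum_{i<j}|w'_i - w'_j| = O(\ell n)$. Therefore each iteration costs $O(\ell n)$, and $n$ iterations total $O(\ell n^2)$; the initial setup of $G$ and the $R_t$ fits in $O(n^2)$ and $O(\ell n^2)$ respectively, as already noted in \cref{thm:mv_complexity}.

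The main obstacle I expect is the complexity analysis: verifying that the shift-difference term is uniformly bounded by $O(\ell n)$ throughout the walk, since this is what prevents $n$ iterations from accumulating extra factors of $\ell$ or $k$. The key input is the problem's parameter constraint $\binom{\ell+1}{2}(k-1) < n$, which forces both $\ell(k-1)$ and $(k-1)\binom{\ell+1}{3}$ to fit within $O(\ell n)$. Everything else is a straightforward bookkeeping argument combining \cref{thm:wpfwalk_correct}, \cref{prop:wfpwalk_complexity}, and the invariance of $\deg\det$.
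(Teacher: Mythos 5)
Your proposal is correct and follows essentially the same route as the paper's proof: correctness via the initial $\vec w'$-weak Popov form of $M$, iterated application of \cref{thm:wpfwalk_correct}, and invariance under adding a constant shift; complexity via \cref{prop:wfpwalk_complexity} with $\deg\det V = \deg\det M = n$ and the bound $\sum_{i<j}|w'_i - w'_j| \in O(\ell n)$ coming from $\binom{\ell+1}{2}(k-1) < n$. (Only a trivial slip: $M$ is lower, not upper, triangular, which changes nothing since \cref{cor:degdet} covers triangular matrices.)
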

\begin{proof}
  Note that $M$ is in $\w'$-shifted weak Popov form, where $\w'$ is as on \cref{line:mv_wpfwalk:shiftsinit}.
  Thus by the correctness of \cref{alg:wpfwalk}, then $V$ at the end of the algorithm must be in $\big(\w + (n, \ldots, n)\big)$-shifted weak Popov form.
  Then it is clearly also in $\w$-shifted weak Popov form.
  For the complexity, the algorithm simply performs $n$ calls to \cref{alg:wpfwalk}.
  We should estimate the quantity $\sum_{i < j}|w_i - w_j|$, which is greatest in the first iteration.
  Since \cref{prob:mv_interpolation} posits $n > \binom {\ell+1} 2 (k-1)$, we can bound the sum as:
  \[
    \sum_{j=1}^{\ell}(n + j(k-1)) + \sum_{1 \leq i < j}(j-i)(k-1) < \ell n + (\ell+1)\tbinom {\ell+1} 2 (k-1) \in O(\ell n) \ .
  \]
  Since $\deg \det(V) = \deg\det(M) = n$ then by \cref{prop:wfpwalk_complexity} each of the calls to \cref{alg:wpfwalk} therefore costs at most $O(\ell n)$.
\end{proof}

\section{Conclusion}\label{sec:conclusion}

We have explored row reduction of skew polynomial matrices.
For ordinary polynomial rings, row reduction has proven a useful strategy for obtaining flexible, efficient while conceptually simple decoding algorithms for Reed--Solomon and other code families.
Our results introduce the methodology and tools aimed at bringing similar benefits to Gabidulin, Interleaved Gabidulin, Mahdavifar--Vardy, and other skew polynomial-based codes.
We used those tools in two settings.
We solved a general form of multiple skew-shift register synthesis (cf.~\cref{prob:skewpade}), and applied this for decoding of Interleaving Gabidulin codes in complexity $O(\ell \mu^2)$, see \cref{thm:shiftregister_complexity}.
For Mahdavifar--Vardy codes (cf.~\cref{prob:mv_interpolation}), we gave an interpolation algorithm with complexity $O(\ell n^2)$, see~\cref{thm:mv_complexity}.

We extended and analysed the simple and generally applicable Mulders--Storjohann algorithm to the skew polynomial setting.
In both the studied settings, the complexity of that algorithm was initially not satisfactory, but it served as a crucial step in developing more efficient algorithms.
For multiple skew-shift register synthesis, we were able to obtain a good complexity for a more general problem than previously.
For the Mahdavifar--Vardy codes, the improved algorithm was in the shape of a versatile ``Weak Popov Walk'', which could potentially apply to many other problems.
In all previously studied cases, we matched the best known complexities \cite{sidorenko2011skew, xie_linearized_2013} that do not make use of fast multiplication of skew polynomials.

Based on a preprint of this paper, in \cite{puchinger2016alekhnovich} it is shown how to further reduce the complexity for decoding Interleaved Gabidulin codes using a divide-\&-conquer version of Algorithm~\ref{alg:ms}, matching the complexity of \cite{sidorenko2014fast}.

The weak Popov form has many properties that can be beneficial in a coding setting, and which we did not yet explore.
For instance, it allows to easily enumerate all ``small'' elements of the row space: that could e.g.~be used to enumerate \emph{all} solutions to a shift register problem, allowing a chase-like decoding of Interleaved Gabidulin codes beyond half the minimum distance.

\begin{acknowledgement}
The authors would like to thank the anonymous reviewers for suggestions that have substantially improved the clarity of the paper.
\end{acknowledgement}

\bibliographystyle{spmpsci}
\bibliography{ModMinIG}

\begin{thebibliography}{10}
\providecommand{\url}[1]{{#1}}
\providecommand{\urlprefix}{URL }
\expandafter\ifx\csname urlstyle\endcsname\relax
  \providecommand{\doi}[1]{DOI~\discretionary{}{}{}#1}\else
  \providecommand{\doi}{DOI~\discretionary{}{}{}\begingroup
  \urlstyle{rm}\Url}\fi

\bibitem{abramov_solutions_2001}
Abramov, S.A., Bronstein, M.: On solutions of linear functional systems.
\newblock In: Proc. of ISSAC, pp. 1--6 (2001)

\bibitem{alekhnovich_linear_2005}
Alekhnovich, M.: Linear {D}iophantine equations over polynomials and soft
  decoding of {R}eed–{S}olomon codes.
\newblock {IEEE} Trans. Inf. Theory \textbf{51}(7), 2257--2265 (2005)

\bibitem{augot_rank_2013}
Augot, D., Loidreau, P., Robert, G.: Rank metric and {Gabidulin} codes in
  characteristic zero (2013)

\bibitem{baker_pade_1996}
Baker, G., Graves-Morris, P.: {P}adé approximants, vol.~59.
\newblock Cambridge Univ. Press (1996)

\bibitem{beckermann2006fraction}
Beckermann, B., Cheng, H., Labahn, G.: Fraction-free row reduction of matrices
  of {O}re polynomials.
\newblock J. Symb. Comp. \textbf{41}(5), 513--543 (2006)

\bibitem{beckermann_uniform_1994}
Beckermann, B., Labahn, G.: A {Uniform} {Approach} for the {Fast} {Computation}
  of {Matrix}-{Type} {Padé} {Approximants}.
\newblock {SIAM} J. Matr. Anal. Appl. \textbf{15}(3), 804--823 (1994)

\bibitem{beelen_key_2010}
Beelen, P., Brander, K.: Key equations for list decoding of {Reed}–{Solomon}
  codes and how to solve them.
\newblock J. Symb. Comp. \textbf{45}(7), 773--786 (2010)

\bibitem{boucher2014linear}
Boucher, D., Ulmer, F.: Linear codes using skew polynomials with automorphisms
  and derivations.
\newblock Designs, Codes and Cryptography \textbf{70}(3), 405--431 (2014)

\bibitem{cohn_ideal_2010}
Cohn, H., Heninger, N.: Ideal forms of {C}oppersmith's theorem and
  {G}uruswami–{S}udan list decoding.
\newblock arXiv \textbf{1008.1284} (2010)

\bibitem{delsarte1978bilinear}
Delsarte, P.: Bilinear forms over a finite field, with applications to coding
  theory.
\newblock J. Comb. Th. \textbf{25}(3), 226--241 (1978)

\bibitem{dieudonne1943det}
Dieudonn\'e, J.: Les d\'eterminants sur un corps non commutatif.
\newblock Bull. Soc. Math. France \textbf{71}, 27--45 (1943)

\bibitem{draxl1983skew}
Draxl, P.K.: Skew Fields, vol.~81.
\newblock Cambridge Univ. Press (1983)

\bibitem{feng_generalization_1991}
Feng, G.L., Tzeng, K.K.: A {Generalization} of the {Berlekamp}-{Massey}
  {Algorithm} for {Multisequence} {Shift}-{Register} {Synthesis} with
  {Applications} to {Decoding} {Cyclic} {Codes}.
\newblock {IEEE} Trans. Inf. Theory \textbf{37}(5), 1274--1287 (1991)

\bibitem{gabidulin1985theory}
Gabidulin, E.M.: Theory of codes with maximum rank distance.
\newblock Problemy Peredachi Informatsii \textbf{21}(1), 3--16 (1985)

\bibitem{giorgi_complexity_2003}
Giorgi, P., Jeannerod, C., Villard, G.: On the complexity of polynomial matrix
  computations.
\newblock In: Proc. of ISSAC, pp. 135--142 (2003)

\bibitem{guruswami_improved_1999}
Guruswami, V., Sudan, M.: Improved {Decoding} of {Reed}--{Solomon} {Codes} and
  {Algebraic}-{Geometric} {Codes}.
\newblock {IEEE} Trans. Inf. Theory \textbf{45}(6), 1757--1767 (1999)

\bibitem{guruswami_explicit_2014}
Guruswami, V., Wang, C.: Explicit rank-metric codes list-decodable with optimal
  redundancy.
\newblock In: Proc. of RANDOM (2014).
\newblock ArXiv: 1311.7084

\bibitem{guruswami_list_2013}
Guruswami, V., Xing, C.: List {Decoding} {Reed}-solomon, {Algebraic}-geometric,
  and {Gabidulin} {Subcodes} {Up} to the {Singleton} {Bound}.
\newblock In: Proc. of STOC, pp. 843--852. {ACM} (2013)

\bibitem{kailath_linear_1980}
Kailath, T.: Linear {Systems}.
\newblock Prentice-Hall (1980)

\bibitem{lee_list_2008}
Lee, K., O'Sullivan, M.E.: List decoding of {R}eed–{S}olomon codes from a
  {G}röbner basis perspective.
\newblock J. Symb. Comp. \textbf{43}(9), 645 -- 658 (2008)

\bibitem{lenstra85factoring}
Lenstra, A.: Factoring multivariate polynomials over finite fields.
\newblock J. Comp. Syst. Sc. \textbf{30}(2), 235--246 (1985)

\bibitem{li2015solving}
Li, W., Nielsen, J.S.R., Puchinger, S., Sidorenko, V.: Solving shift register
  problems over skew polynomial rings using module minimisation.
\newblock In: Proc. of WCC (2015)

\bibitem{li2014transform}
Li, W., Sidorenko, V., Silva, D.: {On Transform-Domain Error and Erasure
  Correction by {G}abidulin Codes}.
\newblock Designs, Codes and Cryptography \textbf{73}(2), 571--586 (2014)

\bibitem{loidreau2006decoding}
Loidreau, P., Overbeck, R.: Decoding rank errors beyond the error correcting
  capability.
\newblock In: Proc. of ACCT, pp. 186--190 (2006)

\bibitem{mahdavifar2012list}
Mahdavifar, H.: List decoding of subspace codes and rank-metric codes.
\newblock Ph.D. thesis, University of California, San Diego (2012)

\bibitem{mahdavifar_algebraic_2011}
Mahdavifar, H., Vardy, A.: Algebraic list-decoding of subspace codes with
  multiplicities.
\newblock In: Allerton Conf. Comm., Ctrl. and Comp., pp. 1430--1437 (2011)

\bibitem{mahdavifar2013algebraic}
Mahdavifar, H., Vardy, A.: Algebraic list-decoding of subspace codes.
\newblock {IEEE} Trans. Inf. Theory \textbf{59}(12), 7814--7828 (2013)

\bibitem{middeke2011computational}
Middeke, J.: A computational view on normal forms of matrices of {O}re
  polynomials.
\newblock Ph.D. thesis, Research Institute for Symbolic Computation (RISC)
  (2011)

\bibitem{mueelich2016alternative}
M{\"u}elich, S., Puchinger, S., M{\"o}dinger, D., Bossert, M.: {An Alternative
  Decoding Method for {G}abidulin Codes in Characteristic Zero}.
\newblock In: Proc. of IEEE ISIT (2016).
\newblock Preprint available as arXiv:1601.05205

\bibitem{mulders2003lattice}
Mulders, T., Storjohann, A.: On lattice reduction for polynomial matrices.
\newblock J. Symb. Comp. \textbf{35}(4), 377--401 (2003)

\bibitem{nielsen2013generalised}
Nielsen, J.S.R.: Generalised multi-sequence shift-register synthesis using
  module minimisation.
\newblock In: Proc. of IEEE ISIT, pp. 882--886 (2013)

\bibitem{nielsen2014power}
Nielsen, J.S.R.: Power decoding {R}eed--{S}olomon codes up to the {J}ohnson
  radius.
\newblock In: Proc. of ACCT (2014)

\bibitem{nielsen_sub-quadratic_2015}
Nielsen, J.S.R., Beelen, P.: Sub-{Quadratic} {Decoding} of {One}-{Point}
  {Hermitian} {Codes}.
\newblock {IEEE} Trans. Inf. Theory \textbf{61}(6), 3225--3240 (2015)

\bibitem{nielsen_decoding_1998}
Nielsen, R.R., Høholdt, T.: Decoding {Reed–Solomon} codes beyond half the
  minimum distance.
\newblock In: Coding Theory, Cryptography and Related Areas, p. 221–236.
  Springer (1998)

\bibitem{olesh_vector_2006}
Olesh, Z., Storjohann, A.: The vector rational function reconstruction problem.
\newblock In: Proc. of WWCA, pp. 137--149 (2006)

\bibitem{ore_special_1933}
Ore, O.: On a {Special} {Class} of {Polynomials}.
\newblock Trans. Am. Math. Soc. \textbf{35}(3), 559--584 (1933)

\bibitem{ore1933theory}
Ore, O.: Theory of non-commutative polynomials.
\newblock Ann. Math. \textbf{34}(3), 480--508 (1933)

\bibitem{clark2012noncom}
{Pete L. Clark}: Non-commutative algebra.
\newblock University of Georgia.
  \url{http://math.uga.edu/~pete/noncommutativealgebra.pdf} (2012)

\bibitem{puchinger2016alekhnovich}
Puchinger, S., M{\"u}elich, S., M{\"o}dinger, D., Nielsen, J.S.R., Bossert, M.:
  {Decoding Interleaved {G}abidulin Codes using Alekhnovich's Algorithm}.
\newblock In: Proc. of ACCT (2016).
\newblock Preprint available as arXiv:1604.04397

\bibitem{puchinger_fast_2015}
Puchinger, S., Wachter-Zeh, A.: Fast {Operations} on {Linearized} {Polynomials}
  and their {Applications} in {Coding} {Theory}.
\newblock J. Symb. Comp. \textbf{Submitted} (2015).
\newblock Preprint available as arXiv:1512.06520

\bibitem{roth_efficient_2000}
Roth, R., Ruckenstein, G.: Efficient {Decoding} of {Reed}--{Solomon} {Codes}
  {Beyond} {Half} the {Minimum} {Distance}.
\newblock {IEEE} Trans. Inf. Theory \textbf{46}(1), 246 --257 (2000)

\bibitem{roth1991maximum}
Roth, R.M.: Maximum-rank array codes and their application to crisscross error
  correction.
\newblock {IEEE} Trans. Inf. Theory \textbf{37}(2), 328--336 (1991)

\bibitem{sidorenko2014fast}
Sidorenko, V., Bossert, M.: {Fast Skew-Feedback Shift-Register Synthesis}.
\newblock Designs, Codes and Cryptography \textbf{70}(1-2), 55--67 (2014)

\bibitem{sidorenko2011skew}
Sidorenko, V., Jiang, L., Bossert, M.: Skew-feedback shift-register synthesis
  and decoding interleaved {G}abidulin codes.
\newblock {IEEE} Trans. Inf. Theory \textbf{57}(2), 621--632 (2011)

\bibitem{sidorenko_linear_2011}
Sidorenko, V., Schmidt, G.: A {Linear} {Algebraic} {Approach} to
  {Multisequence} {Shift}-{Register} {Synthesis}.
\newblock Problems of Information Transmission \textbf{47}(2), 149--165 (2011)

\bibitem{taelman2006dieudonne}
Taelman, L.: Dieudonn\'e determinants for skew polynomial rings.
\newblock J. Algebra. Appl. \textbf{5}(01), 89--93 (2006)

\bibitem{wachter-zeh_decoding_2013}
Wachter-Zeh, A.: Decoding of block and convolutional codes in rank metric.
\newblock Ph.D. thesis, Universität Ulm (2013)

\bibitem{xie_linearized_2013}
Xie, H., Lin, J., Yan, Z., Suter, B.W.: Linearized {Polynomial} {Interpolation}
  and {Its} {Applications}.
\newblock {IEEE} Trans. Signal Process. \textbf{61}(1), 206--217 (2013)

\bibitem{zeh_interpolation_2011}
Zeh, A., Gentner, C., Augot, D.: An {Interpolation} {Procedure} for {List}
  {Decoding} {Reed}-{Solomon} {Codes} {Based} on {Generalized} {Key}
  {Equations}.
\newblock {IEEE} Trans. Inf. Theory \textbf{57}(9), 5946--5959 (2011)

\bibitem{zhou_efficient_2012}
Zhou, W., Labahn, G.: Efficient algorithms for order basis computation.
\newblock J. Symb. Comp. \textbf{47}(7), 793--819 (2012)

\end{thebibliography}

\end{document}